\newtheorem{theorem}{Theorem}
\newtheorem{lemma}{Lemma}
\newtheorem{corollary}{Corollary}
\newtheorem{remark}{Remark}
\newcommand{\trans}{{\mathrm{T}}}
\newcommand{\E}{\operatorname{E}}
\renewcommand{\P}{\operatorname{P}}
\newcommand{\var}{\operatorname{var}}
\newcommand{\cov}{\operatorname{cov}}
\renewcommand{\phi}{\varphi}
\newcommand{\eqdis}{\stackrel{d}{=}}
\newcommand{\distr}{\stackrel{d}{\to}}
\newcommand{\as}{\stackrel{\textrm{a.s.}}{\to}}
\newcommand{\diag}{\operatorname{diag}}
\newcommand{\Pois}{\operatorname{Pois}}
\newcommand{\Normal}{\operatorname{N}}
\title{Mack's estimator motivated by large exposure asymptotics in a compound Poisson setting}
\author{Nils Engler\footnote{nils.engler@math.su.se, Department of Mathematics, Stockholm University, Sweden} \, and Filip Lindskog\footnote{Corresponding author, lindskog@math.su.se, Department of Mathematics, Stockholm University, Sweden}}
\begin{document}
\maketitle

\begin{abstract}
The distribution-free chain ladder of Mack justified the use of the chain ladder predictor and enabled Mack to derive an estimator of  conditional mean squared error of prediction for the chain ladder predictor. Classical insurance loss models, i.e.~of compound Poisson type, are not consistent with Mack's distribution-free chain ladder. 
However, for a sequence of compound Poisson loss models indexed by exposure (e.g.~number of contracts), we show that the chain ladder predictor and Mack's estimator of conditional mean squared error of prediction can be derived by considering large exposure asymptotics. Hence, quantifying chain ladder prediction uncertainty can be done with Mack's estimator without relying on the validity of the model assumptions of the distribution-free chain ladder.   
\end{abstract}

\noindent {\bf Keywords}: claims reserving, chain ladder, large exposure asymptotics

\section{Introduction}

We consider the problem of predicting outstanding claims costs from insurance contracts whose coverage periods have expired but for which not all claims are known to the insurer. Such prediction tasks are referred to as claims reserving.  
The chain ladder method is arguably the most widespread and well known technique for claims reserving based on claims data organized in run-off triangles, with cells indexed by accident year and deve\-lopment year. The chain ladder method is a deterministic prediction method for predicting the not yet known south-east corner (target triangle) based on the observed north-west corner (historical triangle) of a square with cell values representing accumulated total claims amounts. 
The square and historical triangle can easily be generalized to rectangle and trapezoid, reflecting claims data for more historical accident years. However, we will here consider the traditional setup in order to simplify comparison with influential papers.  
We refer to the text book \cite{Wuthrich-Merz-08} by W\"uthrich and Merz for an overview of methods for claims reserving.  

Important contributions appeared in the 1990s presenting stochastic models and properties of parametric stochastic models that give rise to the chain ladder predictor. Mack \cite{Mack-93} presented three model properties, known as the distribution-free chain ladder model, that together with weighted least squares estimation give rise to the chain ladder predictor. Renshaw and Verrall \cite{Renshaw-Verrall-98} showed that independent Poisson distributed cell values for incremental total claims amounts, together with Maximum Likelihood estimation of parameters for row and column effects, give rise to the chain ladder predictor. The Poisson model is inconsistent with the distribution-free chain ladder. 

The most impressive contribution of Mack in \cite{Mack-93} is the estimator of conditional mean squared error of prediction. The key contribution is the estimator of the contribution of parameter estimation error to conditional mean squared error of prediction. A number of papers have derived the same estimator based on different approaches to statistical estimation in settings consistent with the distribution-free chain ladder, see e.g.~Merz and W\"uthrich \cite{Merz-Wuthrich-08}, R\"ohr \cite{Rohr-16}, Diers et al.~\cite{Diers-et-al-16}, Gisler \cite{Gisler-19}, Lindholm et al.~\cite{Lindholm-et-al-19}. 

Different approaches to the estimation of, and estimators of, prediction error for the chain ladder method sparked some scientific debate, both regarding which stochastic model underlies the chain ladder methods, see e.g.~the papers by Mack and Venter \cite{Mack-Venter-00} and Verrall and England \cite{Verrall-England-00}, and regarding prediction error estimation for the chain ladder method, see 
Buchwalder et al.~\cite{Buchwalder-et-al-06}, Gisler \cite{Gisler-06}, Mack et al.~\cite{Mack-Quarg-Braun-06} and Venter \cite{Venter-06}. 
Gisler revisited, in \cite{Gisler-21}, different estimators for conditional mean squared error in the setting of the distribution-free chain ladder. 
Ultimately, Mack's estimator of conditional mean squared error of prediction has stood the test of time. 

The main contribution of the present paper is that we show that a simple but natural compound Poisson model is fully compatible with both the chain ladder predictor and Mack's estimator of conditional mean squared error of prediction, although the model is incompatible with Mack's distribution-free chain ladder, as long as we consider an insurance portfolio with sufficiently large exposure (e.g.~accumulated total claims amounts based on sufficiently  many contracts). The Poisson model considered by Renshaw and Verrall in \cite{Renshaw-Verrall-98} is a special case of the compound Poisson model we consider, and consequently also their Poisson model gives rise to Mack's estimator of conditional mean squared error of prediction. 
     
The rest of the paper is organized as follows. Section \ref{sec:themodel} presents the stochastic model we consider, both a simple model called the special model and a more general model. The special model is a classical insurance loss model (independent compound Poisson processes in each cell of the run-off triangle of incremental total claim amounts).      
Section \ref{sec:dfcl} recalls Mack's distribution-free chain ladder. 
Section \ref{sec:lea} presents asymptotic results that demonstrates that we can retrieve Mack's classical estimators in model setting that are incompatible with the distribution-free chain ladder. 
Section \ref{sec:numeric_example} presents a numerical example that illustrates the theoretical results in Section \ref{sec:lea}.  
The proofs are found in Section \ref{sec:proofs}. 

\section{The model}\label{sec:themodel}

We will focus on a simple yet general class of models for the number of reported claims and the cost of these claims. 
In line with classical reserving methods based on claims data organized in run-off triangles, we consider $T$ accident years and $T$ development years. For $i,t\in \mathcal{T}=\{1,\dots,T\}$, let $C^{\alpha}_{i,t}$ denote the accumulated total claims amount due to accident events in accident year $i$ that are paid up to and including development year $t$. The parameter $\alpha$ is a measure of exposure, such as the number of contracts of not yet fully developed accident years. We will analyze asymptotics as $\alpha\to\infty$ and use the findings to motivate the use of well established predictors and estimators in settings that are not consistent with model assumptions used to derive the classical results for the chain ladder method. A given claims reserving situation of course corresponds to a single, typically large, number $\alpha$. As in any other situation where asymptotic arguments are the basis for approximation, we embed the prediction problem in a sequence of prediction problems, indexed by $\alpha$.    

The {\bf special model} is simply a set of independent Cram\'er-Lundberg (compound Poisson) models, indexed by accident year and development year, with a common claim size distribution with finite variance and positive mean, where exposure parameter $\alpha$ plays the role of time in the Cram\'er-Lundberg models.    
Consider incremental accumulated total claim amounts $X^{\alpha}_{i,t}$ due to accident events in accident year $i$ that are paid during development year $t$: $X^{\alpha}_{i,1}=C^{\alpha}_{i,1}$ and $X^{\alpha}_{i,t}=C^{\alpha}_{i,t}-C^{\alpha}_{i,t-1}$ for $t\geq 2$. 
Consider constants $\lambda_1,\dots,\lambda_T\in (0,\infty)$ and $q_1,\dots,q_T\in (0,1)$ with $\sum_{t=1}^{T}q_t=1$. 
For each $i,t\in \mathcal{T}$, $(X^{\alpha}_{i,t})_{\alpha\geq 0}$ is a Cram\'er-Lundberg model with representation 
$$
X^{\alpha}_{i,t}=\sum_{k=1}^{N^{\alpha}_{i,t}}Z_{i,t,k}, \quad \alpha\geq 0,
$$ 
where $(N^{\alpha}_{i,t})_{\alpha\geq 0}$ is a homogeneous Poisson process with intensity $\lambda_iq_t\in (0,\infty)$, independent of the i.i.d.~sequence $(Z_{i,t,k})_{k=1}^{\infty}$. The claim size variables satisfy $Z_{i,t,k}\eqdis Z$ for all $i,t,k$ for some $Z$ with finite variance and positive mean. 
Moreover, the compound Poisson processes $(X^{\alpha}_{i,t})_{\alpha\geq 0}$, $(i,t)\in \mathcal{T}\times \mathcal{T}$, are independent.   

We want to highlight the special case of the special model obtained by letting $Z\equiv 1$. In this case the special model is simply a set of independent homogeneous Poisson processes, indexed by accident year and development year, where exposure parameter $\alpha$ plays the role of time. In particular, for a fixed $\alpha$, we obtain the model considered by Renshaw and Verrall in \cite{Renshaw-Verrall-98} as a model underlying the chain ladder method since it gives rise to the chain ladder predictor (see Section \ref{sec:dfcl}) upon replacing unknown parameters by their Maximum Likelihood estimates. 

\subsection{The general model}

Several of the statements in Section \ref{sec:lea} below hold for a wider class of models than the special model. 
The {\bf general model}, (GM1)-(GM4) below, allows us to write 
\begin{align*}
C^{\alpha}_{i,t}=\sum_{k=1}^{M^{\alpha}_i}Z_{i,k} I\{D_{i,k}\leq t\},
\end{align*}  
where $M^{\alpha}_i$ denotes the number of accident events in accident year $i$, $Z_{i,k}$ denotes the size of the $k$th such claim and $D_{i,k}$ denotes the corresponding development year, the indicator $I\{D_{i,k}\leq t\}$ equals $1$ if $D_{i,k}\leq t$. 
The properties GM1-GM4 together constitute the {\bf general model}: 
\begin{itemize}
\item[(GM1)] $(D_{1,k},Z_{1,k})_{k=1}^{\infty}, \dots, (D_{T,k},Z_{T,k})_{k=1}^{\infty}$ are i.i.d.~sequences. The common distribution of the 
terms $(D_{i,k},Z_{i,k})$ does not depend on the accident year $i$. With $(D,Z)$ denoting a generic such pair, 
\begin{align*}
\E[Z^2]<\infty \quad\text{and}\quad \E[ZI\{D=t\}]>0 \quad\text{for each } t\in\mathcal{T}. 
\end{align*} 
\item[(GM2)] For each $i$, $(D_{i,k},Z_{i,k})_{k=1}^{\infty}$ and $M^{\alpha}_i$ are independent.
\item[(GM3)] $\{M^{\alpha}_1,(D_{1,k},Z_{1,k})_{k=1}^{\infty}\}, \dots, \{M^{\alpha}_T,(D_{T,k},Z_{T,k})_{k=1}^{\infty}\}$ are independent. 
\item[(GM4)] For each $i$ there exists $\lambda_i\in (0,\infty)$ such that $M^{\alpha}_i/\alpha \stackrel{\textrm{a.s.}}{\to} \lambda_i$ as $\alpha\to\infty$. 
\end{itemize}
By (GM3), claims data variables are independent if they correspond to different accident years. However, the components of $(D,Z)$ are possibly dependent, allowing for the distribution of claim size to depend on development year. Note that we allow for exposures to vary between accident years, reflected in possibly different parameters $\lambda_1,\dots,\lambda_T$. Note also that the incremental accumulated claims amounts $X^{\alpha}_{i,s}$ and $X^{\alpha}_{i,t}$, $s\neq t$, are in general not independent (unless $M^{\alpha}_i$ is Poisson distributed).   

In order to derive Mack's estimator in \cite{Mack-93} of conditional mean squared error of prediction for the chain ladder predictor we must consider a special case of the general model:   
\begin{itemize} 
\item[(SM1)] (GM1)-(GM3) hold.
\item[(SM2)] $D$ and $Z$ are independent. 
\item[(SM3)] For each $i$, $(M^{\alpha}_i)_{\alpha\geq 0}$ is a homogeneous Poisson process with intensity $\lambda_i\in (0,\infty)$. 
\end{itemize}
The properties (SM1)-(SM3) together form an alternative way of specifying the {\bf special model}. Since (SM3) implies (GM4), the special model is a special case of the general model. 

\section{Mack's distribution-free chain ladder}\label{sec:dfcl}

The arguably most well-known method for claims reserving is the chain ladder method. In the seminal paper \cite{Mack-93}, Thomas Mack presented properties, see \eqref{eq:MackCL_cond1} and \eqref{eq:MackCL_cond2} below, for conditional distributions of accumulated total claims amounts that, together with \eqref{eq:MackCL_cond3} below, make the chain ladder prediction method the optimal prediction method for predicting outstanding claims amounts. 
Moreover, and this is the main contribution of \cite{Mack-93}, he showed that these properties lead to an estimator of the conditional mean squared error of the chain ladder predictor.  

With $C_{i,t}$ denoting the accumulated total claims amount up to and including development year $t$ for accidents during accident year $i$, Mack considered the following assumptions for the data generating process: for $t=1,\dots,T-1$ there exist constants $f_{\textrm{MCL}t}>0$ and $\sigma_{\textrm{MCL}t}^2\geq 0$ such that 
\begin{align}
&\E[C_{i,t+1}\mid C_{i,1},\dots,C_{i,t}]=f_{\textrm{MCL}t}C_{i,t}, \quad t=1,\dots,T-1, \label{eq:MackCL_cond1} \\ 
&\var(C_{i,t+1}\mid C_{i,1},\dots,C_{i,t})=\sigma_{\textrm{MCL}t}^2C_{i,t}, \quad t=1,\dots,T-1, \label{eq:MackCL_cond2}
\end{align}
and 
\begin{align}\label{eq:MackCL_cond3}
(C_{1,1},\dots,C_{1,T}), \dots, (C_{T,1},\dots,C_{T,T}) \quad\text{are independent}. 
\end{align}
The conditions \eqref{eq:MackCL_cond1}, \eqref{eq:MackCL_cond2} and \eqref{eq:MackCL_cond3} together are referred to as Mack's distribution-free chain ladder model. 
The parameters $f_{\textrm{MCL}t}$ and $\sigma^2_{\textrm{MCL}t}$ are estimated by 
\begin{align*}
\widehat{f}_{t}=\frac{\sum_{i=1}^{T-t}C_{i,t+1}}{\sum_{i=1}^{T-t}C_{i,t}}
\quad \text{and} \quad 
\widehat{\sigma}^2_{t}=\frac{1}{T-t-1}\sum_{i=1}^{T-t}C_{i,t}\bigg(\frac{C_{i,t+1}}{C_{i,t}}-\widehat{f}_{t}\bigg)^2,
\end{align*}
respectively. We refer to \cite{Mack-93} for properties of these parameter estimators. 

The property \eqref{eq:MackCL_cond2} for the conditional variance is very difficult to assess from data in the form of run-off triangles on which the chain ladder method is applied. We refer to \cite{Mack-94-CAS} for tests assessing the assumptions of Mack's distribution-free chain ladder. Moreover, it is notoriously difficult to find stochastic models that satisfy this property. Note that the special model, see Section \ref{sec:themodel}, does not satisfy Mack's conditions: neither \eqref{eq:MackCL_cond1} nor \eqref{eq:MackCL_cond2} hold. By Theorem 3.3.6. in \cite{Mikosch-09}, for the special model, 
\begin{align*}
\sum_{k=1}^{M^{\alpha}_i}Z_{i,k} I\{D_{i,k}\leq t\} \quad \text{and} \quad \sum_{k=1}^{M^{\alpha}_i}Z_{i,k} I\{D_{i,k}=t+1\}
\end{align*}
are independent. Consequently, for the special model, 
\begin{align*}
\E[C^{\alpha}_{i,t+1}\mid C^{\alpha}_{i,1},\dots,C^{\alpha}_{i,t}]=C^{\alpha}_{i,t}+\E[M^{\alpha}_i]\P(D=t+1)\E[Z]
\end{align*}
and 
\begin{align*}
\var(C^{\alpha}_{i,t+1}\mid C^{\alpha}_{i,1},\dots,C^{\alpha}_{i,t})=\E[M^{\alpha}_i]\P(D=t+1)\E[Z^2].
\end{align*}

It is shown in Theorem \ref{thm:estimation} below that large exposure limits, as $\alpha\to\infty$, do exist for estimators $\widehat{f}_t$ and $\widehat{\sigma}_t^2$. The constant (a.s.~convergence) limit for the parameter estimator $\widehat{f}_t$ has a meaningful interpretation in terms of the general model we consider, and the parameter estimators $\widehat{f}_t$ can be transformed into estimators of parameters of our model, see Remark \ref{rem:f_to_q}. However, Mack's parameter estimator $\widehat{\sigma}_t^2$ converges in distribution to a nondegenerate random variable. Hence, although $\widehat{\sigma}_t^2$ will generate numerical values that may seem reasonable, such values do not correspond to outcomes of random variables converging to a parameter. 

The main contribution of Mack's paper \cite{Mack-93} is the derivation of an estimator of the conditional mean squared error of prediction 
\begin{align*}
\E\big[(C_{i,T}-\widehat{C}_{i,T})^2 \mid \mathcal{D}\big], 
\end{align*}
where $\mathcal{D}$ is the $\sigma$-algebra generated by the data observed at the time of prediction: $\{C_{j,t}:j,t\in\mathcal{T},j+t\leq T+1\}$. The $\mathcal{D}$-measurable estimator derived by Mack of $\E[(C_{i,T}-\widehat{C}_{i,T})^2 \mid \mathcal{D}]$ is  
\begin{align}\label{eq:mack_msep_estimator} 
(\widehat{C}_{i,T})^2
\sum_{t=T-i+1}^{T-1}\frac{\widehat{\sigma}^2_t}{\widehat{f}^2_t}
\bigg(\frac{1}{\widehat{C}_{i,t}}+\frac{1}{\sum_{j=1}^{T-t}C_{j,t}}\bigg),
\end{align}
where $\widehat{C}_{i,T-i+1}=C_{i,T-i+1}$ and $\widehat{C}_{i,t}=C_{i,T-i+1}\prod_{s=T-i+1}^{t-1}\widehat{f}_s$ for $t>T-i+1$.  
We will show that when considering the special model (SM1)-(SM3), large exposure asymptotics naturally lead to Mack's estimator of conditional mean squared error of prediction despite the fact that the special model is inconsistent with Mack's distribution-free chain ladder.  
Hence, the chain ladder predictor $\widehat{C}_{i,T}=C_{i,T-i+1}\prod_{s=T-i+1}^{T-1}\widehat{f}_s$ may be used together with an assessment of its accuracy by \eqref{eq:mack_msep_estimator} without having to rely on the validity of \eqref{eq:MackCL_cond1} and \eqref{eq:MackCL_cond2} of Mack's distribution-free chain ladder. 

\section{Large exposure asymptotics}\label{sec:lea}

We will next present the main results, motivating the use of the chain ladder method and Mack's estimator of conditional mean squared error of prediction, in the setting of the general or special model. 
Recall that, for $i,t\in \mathcal{T}$, $C^{\alpha}_{i,t}=\sum_{k=1}^{M^{\alpha}_i}Z_{i,k}I\{D_{i,k}\leq t\}$. 
Let $\chi^2_{\nu}$ denote a random variable with a chi squared distribution with $\nu$ degrees of freedom. Let $\Normal_T(\mu,\Sigma)$ denote the $T$-dimensional normal distribution with mean $\mu$ and covariance matrix $\Sigma$. In what follows, convergence of random variables should be understood as convergence as $\alpha\to\infty$. 

\begin{theorem}\label{thm:estimation}
Consider the general model (GM1)-(GM4). 
For each $t\in\mathcal{T}$ with $t\leq T-1$,  
\begin{align}\label{eq:f_hat}
\widehat{f}_{t}=&\frac{\sum_{i=1}^{T-t}C^{\alpha}_{i,t+1}}{\sum_{i=1}^{T-t}C^{\alpha}_{i,t}}
\as \frac{\E[ZI\{D\leq t+1\}]}{\E[ZI\{D\leq t\}]}=f_{t}.
\end{align}
For each $i\in\mathcal{T}$ with $i\geq 2$, 
\begin{align}\label{eq:clpredictor}
\frac{C^{\alpha}_{i,T-i+1}\prod_{t=T-i+1}^{T-1}\widehat{f}_t}{C^{\alpha}_{i,T}} \as 1.
\end{align}
For each $t\in\mathcal{T}$ with $t\leq T-2$,  
\begin{align}\label{eq:sigma_hat}
\widehat{\sigma}^2_{t}=\frac{1}{T-t-1}\sum_{i=1}^{T-t}C^{\alpha}_{i,t}\bigg(\frac{C^{\alpha}_{i,t+1}}{C^{\alpha}_{i,t}}-\widehat{f}_{t}\bigg)^2 \distr 
\sigma^2_t \frac{\chi^2_{T-t-1}}{T-t-1},  
\end{align}
where
\begin{align*}
\sigma^2_t = (f_t-1)\bigg(\frac{\E[Z^2 I\{D=t+1\}]}{\E[Z I\{D=t+1\}]}+(f_t-1)\frac{\E[Z^2 I\{D\leq t\}]}{\E[Z I\{D\leq t\}]}\bigg).
\end{align*}
\end{theorem}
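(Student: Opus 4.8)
The plan is to treat the three displays \eqref{eq:f_hat}, \eqref{eq:clpredictor}, \eqref{eq:sigma_hat} in order, since each relies on the previous. For \eqref{eq:f_hat}, the key observation is that $C^{\alpha}_{i,t}=\sum_{k=1}^{M^{\alpha}_i}Z_{i,k}I\{D_{i,k}\le t\}$ is a random sum of i.i.d.\ terms with $M^{\alpha}_i$ independent of the sequence by (GM2)--(GM3). I would write $C^{\alpha}_{i,t}/\alpha = (M^{\alpha}_i/\alpha)\cdot(1/M^{\alpha}_i)\sum_{k=1}^{M^{\alpha}_i}Z_{i,k}I\{D_{i,k}\le t\}$, use (GM4) to get $M^{\alpha}_i/\alpha\to\lambda_i$ a.s.\ (in particular $M^{\alpha}_i\to\infty$ a.s.), and apply the strong law of large numbers to the second factor, giving $C^{\alpha}_{i,t}/\alpha \as \lambda_i\E[ZI\{D\le t\}]$. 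Summing over $i=1,\dots,T-t$ and taking the ratio, the $\lambda_i$'s do \emph{not} cancel in general, so I would be careful: the limit is $\sum_i\lambda_i\E[ZI\{D\le t+1\}] \big/ \sum_i\lambda_i\E[ZI\{D\le t\}]$, and since $\E[ZI\{D\le s\}]$ does not depend on $i$ by (GM1), the sums $\sum_i\lambda_i$ cancel, leaving $f_t=\E[ZI\{D\le t+1\}]/\E[ZI\{D\le t\}]$. The denominator is positive by the positivity assumption in (GM1), so the ratio is well defined in the limit. Then \eqref{eq:clpredictor} is immediate: by the same a.s.\ limits, $C^{\alpha}_{i,T-i+1}/\alpha \to \lambda_i\E[ZI\{D\le T-i+1\}]$ and $C^{\alpha}_{i,T}/\alpha\to\lambda_i\E[ZI\{D\le T\}]=\lambda_i\E[Z]$, while $\prod_{t=T-i+1}^{T-1}\widehat f_t \as \prod_{t=T-i+1}^{T-1} f_t = \E[Z]/\E[ZI\{D\le T-i+1\}]$ by telescoping; multiplying and dividing gives the ratio $\to 1$.

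The substantive part is \eqref{eq:sigma_hat}, the convergence in distribution of $\widehat\sigma_t^2$ to $\sigma_t^2\,\chi^2_{T-t-1}/(T-t-1)$. The natural approach is a CLT at rate $\sqrt\alpha$. Write $\widehat\sigma_t^2 = \frac{1}{T-t-1}\sum_{i=1}^{T-t} \frac{(C^{\alpha}_{i,t+1}-\widehat f_t C^{\alpha}_{i,t})^2}{C^{\alpha}_{i,t}}$. I would introduce the centered increments $U^{\alpha}_{i} := C^{\alpha}_{i,t+1}-f_t C^{\alpha}_{i,t} = \sum_{k=1}^{M^{\alpha}_i}\big(Z_{i,k}I\{D_{i,k}=t+1\}-(f_t-1)Z_{i,k}I\{D_{i,k}\le t\}\big)$ (using $I\{D\le t+1\}-f_tI\{D\le t\} = I\{D=t+1\}-(f_t-1)I\{D\le t\}$), which has mean zero by the definition of $f_t$. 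A random-sum CLT (Anscombe's theorem, or conditioning on $M^{\alpha}_i$ and using $M^{\alpha}_i/\alpha\to\lambda_i$) gives $U^{\alpha}_i/\sqrt\alpha \distr \Normal(0,\lambda_i v_t)$ jointly and independently over $i$, where $v_t = \Var\big(ZI\{D=t+1\}-(f_t-1)ZI\{D\le t\}\big)$; since $I\{D=t+1\}$ and $I\{D\le t\}$ are disjoint, the cross term vanishes and $v_t = \E[Z^2I\{D=t+1\}] - (\E[ZI\{D=t+1\}])^2/\cdots$ — I would simplify this to $v_t = \E[Z^2I\{D=t+1\}]+(f_t-1)^2\E[Z^2I\{D\le t\}]$ minus the squares of the means, and then divide by the limit of $C^{\alpha}_{i,t}/\alpha = \lambda_i\E[ZI\{D\le t\}]$ to identify $\sigma_t^2 = v_t/\E[ZI\{D\le t\}]$ after the algebra collapses to the stated $(f_t-1)\big(\E[Z^2I\{D=t+1\}]/\E[ZI\{D=t+1\}] + (f_t-1)\E[Z^2I\{D\le t\}]/\E[ZI\{D\le t\}]\big)$ — this identification is a bookkeeping step using $f_t-1 = \E[ZI\{D=t+1\}]/\E[ZI\{D\le t\}]$. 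Then $(C^{\alpha}_{i,t+1}-f_tC^{\alpha}_{i,t})^2/C^{\alpha}_{i,t} = (U^{\alpha}_i/\sqrt\alpha)^2/(C^{\alpha}_{i,t}/\alpha) \distr \sigma_t^2\,Y_i$ where $Y_i$ are i.i.d.\ $\chi^2_1$, jointly over $i$.

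The remaining obstacle — and the step I expect to be the real work — is replacing $f_t$ by $\widehat f_t$ inside the sum, i.e.\ showing that the difference between $\sum_i (C^{\alpha}_{i,t+1}-\widehat f_t C^{\alpha}_{i,t})^2/C^{\alpha}_{i,t}$ and $\sum_i (C^{\alpha}_{i,t+1}-f_t C^{\alpha}_{i,t})^2/C^{\alpha}_{i,t}$ is $o_P(1)$, and then recognizing the limiting law of $\sum_{i=1}^{T-t}(C^{\alpha}_{i,t+1}-\widehat f_t C^{\alpha}_{i,t})^2/C^{\alpha}_{i,t}$ as $\sigma_t^2\chi^2_{T-t-1}$ rather than $\sigma_t^2\chi^2_{T-t}$. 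The mechanism is exactly the classical one degree-of-freedom loss: expanding $(U^{\alpha}_i - (\widehat f_t-f_t)C^{\alpha}_{i,t})^2$ and using $\widehat f_t - f_t = -\big(\sum_i U^{\alpha}_i\big)\big/\big(\sum_i C^{\alpha}_{i,t}\big)$, the cross-term exactly subtracts one squared linear combination of the limiting Gaussians. Concretely, setting $W^{\alpha}_i := U^{\alpha}_i/\sqrt{\alpha\,\lambda_i\E[ZI\{D\le t\}]}$ so that $(W^{\alpha}_1,\dots,W^{\alpha}_{T-t})\distr(G_1,\dots,G_{T-t})$ i.i.d.\ $\Normal(0,\sigma_t^2)$, and writing $p_i := \lambda_i\E[ZI\{D\le t\}]\big/\sum_j\lambda_j\E[ZI\{D\le t\}]$ (so $\sum p_i=1$), one gets $\widehat\sigma_t^2 \distr \frac{\sigma_t^2}{T-t-1}\sum_{i=1}^{T-t}\big(G_i - \sum_j \sqrt{p_ip_j}\,G_j/\sqrt{p_i}\cdots\big)^2$; the quadratic form here is $\sigma_t^2$ times the squared norm of the projection of $(G_1,\dots,G_{T-t})$ onto the orthogonal complement of a fixed unit vector, which is $\sigma_t^2\chi^2_{T-t-1}$ regardless of the weights $p_i$. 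I would state this as a linear-algebra lemma (idempotent projection of rank $T-t-1$ applied to an i.i.d.\ Gaussian vector) to keep the argument clean. Care is needed that $T-t-1\ge 1$, which is exactly the hypothesis $t\le T-2$, and that all denominators $C^{\alpha}_{i,t}$ are eventually positive a.s.\ (true since $C^{\alpha}_{i,t}/\alpha\to\lambda_i\E[ZI\{D\le t\}]>0$); joint convergence across $i$ follows from independence in (GM3) and the continuous mapping / Slutsky theorem combining $W^{\alpha}_i\distr G_i$ with $C^{\alpha}_{i,t}/\alpha\as\lambda_i\E[ZI\{D\le t\}]$ and $\widehat f_t\as f_t$.
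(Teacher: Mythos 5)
Your proposal is correct and follows essentially the same route as the paper's proof: the strong law for random sums (with $M^{\alpha}_i/\alpha\as\lambda_i$) for \eqref{eq:f_hat} and \eqref{eq:clpredictor}, an Anscombe-type CLT for the centered per-accident-year sums, and identification of the limiting quadratic form as a rank-$(T-t-1)$ Gaussian projection --- the paper carries out this last step by writing the vector of residual terms as $B^{\alpha}U^{\alpha}$, computing the eigenvalues ($1$ with multiplicity $T-t-1$, $0$ with multiplicity $1$) of the limiting covariance matrix, and invoking a small lemma equivalent to your projection lemma. The only blemish is the immaterial sign slip in $\widehat f_t - f_t = +\big(\sum_i U^{\alpha}_i\big)\big/\big(\sum_i C^{\alpha}_{i,t}\big)$, which disappears upon squaring.
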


\begin{remark}
We do not index $\widehat{f}_{t}$ and $\widehat{\sigma}^2_{t}$ by the exposure parameter $\alpha$. It should be clear from the context whether $\widehat{f}_{t}$ should be seen as an element in a convergent sequence or simply as a function of the given data. Similarly for $\widehat{\sigma}^2_{t}$.  
\end{remark}

\begin{remark}
For the convergence in \eqref{eq:f_hat} and \eqref{eq:clpredictor} it is not necessary to assume that $M^{\alpha}_{1},\dots,M^{\alpha}_{T}$ are independent. 
If $Z$ and $D$ are independent, then the limit expressions in \eqref{eq:f_hat} and \eqref{eq:sigma_hat} simplify:   
\begin{align*}
f_t=\frac{\sum_{s=1}^{t+1}q_s}{\sum_{s=1}^{t}q_s}, \quad 
\sigma^2_t=(f_t-1)f_t\frac{\E[Z^2]}{\E[Z]}=\frac{q_{t+1}\sum_{s=1}^{t+1}q_s}{(\sum_{s=1}^{t}q_s)^2} \frac{\E[Z^2]}{\E[Z]},
\end{align*}
where $q_t=\P(D=t)$. 
\end{remark}

\begin{remark}
The convergence \eqref{eq:clpredictor} supports the use of the chain ladder predictor 
$$
\widehat{C}_{i,T}=C_{i,T-i+1}\widehat{f}_{T-i+1} \dots  \widehat{f}_{T-1}
$$ 
whose prediction error is studied in \cite{Mack-93} and \cite{Mack-94-CAS}.  
However, \eqref{eq:sigma_hat} says that from numerical estimates $\widehat{\sigma}^2_{t}$ we may not conclude that there is empirical evidence in support of the assumption \eqref{eq:MackCL_cond2} of Mack's distribution-free chain ladder.  
\end{remark}

\begin{remark}\label{rem:f_to_q}
It follows from \eqref{eq:f_hat} that if we either replace the claims amounts by the number of claims (corresponding to $Z\equiv 1$) in the estimator $\widehat{f}_t$, or assume that the variables $D$ and $Z$ are independent, then the estimators $\widehat{f}_1,\dots,\widehat{f}_{T-1}$ can be transformed into consistent estimators $\widehat{q}_1,\dots,\widehat{q}_{T}$, where $q_t=\P(D=t)$. 
More generally, 
\begin{align*}
\bigg(\frac{1}{\prod_{s=1}^{T-1}\widehat{f}_s},\frac{\widehat{f}_{1}-1}{\prod_{s=1}^{T-1}\widehat{f}_s},\frac{(\widehat{f}_{2}-1)\widehat{f}_1}{\prod_{s=1}^{T-1}\widehat{f}_s},\dots,\frac{(\widehat{f}_{T-1}-1)\prod_{s=1}^{T-2}\widehat{f}_s}{\prod_{s=1}^{T-1}\widehat{f}_s}\bigg)
\end{align*}
converges a.s.~to $(\widetilde{q}_1,\dots,\widetilde{q}_T)$, where $\widetilde{q}_t=\widetilde{\P}(D=t)=\E[Z]^{-1}\E[ZI\{D=t\}]$. 
In particular, if the generic pair $(D,Z)$ has independent components or if $Z\equiv 1$, then 
$(\widetilde{q}_1,\dots,\widetilde{q}_T)=(q_1,\dots,q_T)$.  
\end{remark}

\subsection{Conditional mean squared error of prediction}

The natural measure of prediction error is 
\begin{align}\label{eq:msep}
\E\big[(C^{\alpha}_{i,T}-\widehat{C}^{\alpha}_{i,T})^2 \mid \mathcal{D}^{\alpha}\big], 
\end{align}
where $\mathcal{D}^{\alpha}$ is the $\sigma$-algebra generated by $\{C^{\alpha}_{j,t}:j,t\in\mathcal{T},j+t\leq T+1\}$, the run-off triangle that is fully observed at the time of prediction. Since we are considering large exposure limits, the conditional expectation \eqref{eq:msep} diverges as $\alpha\to\infty$ and is hence not meaningful. However, we show (Theorems \ref{thm:msep_second_term}, \ref{thm:msep_first_term} and \ref{thm:msep_third_term} together with Remark \ref{rem:msep_joint_convergence}) that there exists a random variable $L$ such that the standardized mean squared error of prediction converges in distribution,  
\begin{align}\label{eq:convtoL}
\E\bigg[\frac{(C^{\alpha}_{i,T}-\widehat{C}^{\alpha}_{i,T})^2}{C^{\alpha}_{i,T-i+1}} \mid \mathcal{D}^{\alpha}\bigg]
\distr L,  
\end{align}
and that the limit $L$ has a natural $\mathcal{D}^{\alpha}$-measurable estimator $\widehat{L}^{\alpha}$ (Remarks \ref{rem:msep_second_term},  \ref{rem:msep_first_term} and  \ref{rem:msep_third_term}). Consequently, the natural estimator of the prediction error \eqref{eq:msep} is $C^{\alpha}_{i,T-i+1}\widehat{L}^{\alpha}$:
\begin{align*}
\E\big[(C^{\alpha}_{i,T}-\widehat{C}^{\alpha}_{i,T})^2 \mid \mathcal{D}^{\alpha}\big]
=C^{\alpha}_{i,T-i+1} \E\bigg[\frac{(C^{\alpha}_{i,T}-\widehat{C}^{\alpha}_{i,T})^2}{C^{\alpha}_{i,T-i+1}} \mid \mathcal{D}^{\alpha}\bigg]
\approx C^{\alpha}_{i,T-i+1}\widehat{L}^{\alpha}.
\end{align*}
Our aim is to arrive at an estimator of conditional mean squared error of prediction that coincides with Mack's estimator \eqref{eq:mack_msep_estimator}, and this is not in general true in the setting of the general model. Therefore, we need to consider the special model (SM1)-(SM3). 

Combining Theorems \ref{thm:msep_second_term}, \ref{thm:msep_first_term}, \ref{thm:msep_third_term} and Remarks \ref{rem:msep_second_term}, \ref{rem:msep_first_term}, \ref{rem:msep_third_term} below we show that  
\begin{align}\label{eq:msep_estimator} 
C^{\alpha}_{i,T-i+1}\widehat{L}^{\alpha}=(\widehat{C}^{\alpha}_{i,T-i+1})^2\sum_{t=T-i+1}^{T-1}\frac{\widehat{\sigma}^2_t}{\widehat{f}^2_t}
\bigg(\frac{1}{\widehat{C}^{\alpha}_{i,t}}+\frac{1}{\sum_{j=1}^{T-t}C^{\alpha}_{j,t}}\bigg)
\end{align}
which coincides with the estimator of conditional mean squared error of prediction obtained by Mack in \cite{Mack-93}. 
Note that in \eqref{eq:msep_estimator} we use the notation 
\begin{align*}
\widehat{C}^{\alpha}_{i,T-i+1}=C^{\alpha}_{i,T-i+1}, \quad 
\widehat{C}^{\alpha}_{i,t}=C^{\alpha}_{i,T-i+1}\prod_{s=T-i+1}^{t-1}\widehat{f}_s, \quad t>T-i+1.
\end{align*}
Note that $C^{\alpha}_{i,T-i+1}$ is independent of $\widehat{f}_{T-i+1},\dots,\widehat{f}_{T-1}$ since the latter estimators are functions of only data from accident years $\leq i-1$. Hence, $\widehat{C}^{\alpha}_{i,T}=C^{\alpha}_{i,T-i+1}\prod_{s=T-i+1}^{T-1}\widehat{f}_s$ is a product of two independent factors. 
In order to verify the convergence in \eqref{eq:convtoL}, note that the left-hand side in \eqref{eq:convtoL} can be expressed as
\begin{align}
&\E\bigg[\frac{(C^{\alpha}_{i,T}-C^{\alpha}_{i,T-i+1}\prod_{s=T-i+1}^{T-1}f_s)^2}{C^{\alpha}_{i,T-i+1}} \mid \mathcal{D}^{\alpha}\bigg] \label{eq:msep_first_term}\\
&\quad+C^{\alpha}_{i,T-i+1}\bigg(\prod_{s=T-i+1}^{T-1}f_s-\prod_{s=T-i+1}^{T-1}\widehat{f}_s\bigg)^2 \label{eq:msep_second_term} \\
&\quad+2\E\bigg[C^{\alpha}_{i,T}-C^{\alpha}_{i,T-i+1}\prod_{s=T-i+1}^{T-1}f_s\mid \mathcal{D}^{\alpha}\bigg]\bigg(\prod_{s=T-i+1}^{T-1}f_s-\prod_{s=T-i+1}^{T-1}\widehat{f}_s\bigg) \label{eq:msep_third_term}. 
\end{align}
In the literature, the first term \eqref{eq:msep_first_term} (upon multiplication by $C^{\alpha}_{i,T-i+1}$) is referred to as process variance, and the second term \eqref{eq:msep_second_term} (upon multiplication by $C^{\alpha}_{i,T-i+1}$) is referred to as estimation error. In the setting of the distribution-free chain ladder, \eqref{eq:msep_first_term} is a conditional variance. However, in our setting (the general or special model, see Section \ref{sec:themodel}) this term is not a conditional variance. Hence, we will not use the terminology ``process variance''.  
Note that the two factors in \eqref{eq:msep_third_term} are independent because of independent accident years. This fact will enable us to study the asymptotic behavior of \eqref{eq:msep_third_term}, convergence in distribution, and verify that the limit distribution has zero mean.  

Theorem \ref{thm:msep_second_term} below shows that the second term \eqref{eq:msep_second_term} converges in distribution in the setting of the general model. 
Theorem \ref{thm:msep_first_term} below shows that the first term \eqref{eq:msep_first_term} converges in distribution in the setting of the special model. In fact, the Poisson assumption for the counting variables is not not needed for convergence in distribution. However, we need it in order to obtain an estimator of conditional mean squared error of prediction that coincides with the estimator derived by Mack in \cite{Mack-93}. 
Theorem \ref{thm:msep_third_term} below shows that the third term \eqref{eq:msep_third_term} converges in distribution in the setting of the special model. 
Remark \ref{rem:msep_joint_convergence} below clarifies that the sum of the terms converges in distribution in the setting of the special model. 

\begin{theorem}\label{thm:msep_second_term}
Consider the general model (GM1)-(GM4). 
For each $i\in\mathcal{T}$ with $i\geq 2$, there exists $\gamma_i\in(0,\infty)$ such that  
\begin{align*}
C^{\alpha}_{i,T-i+1}\bigg(\prod_{s=T-i+1}^{T-1}f_s-\prod_{s=T-i+1}^{T-1}\widehat{f}_s\bigg)^2 
\distr \gamma_i^2\chi^2_1.
\end{align*}
If $Z$ and $D$ are independent, then 
\begin{align*}
\gamma_i^2=\lambda_i\E[ZI\{D\leq T-i+1\}]\prod_{s=T-i+1}^{T-1}f_s^2\sum_{t=T-i+1}^{T-1}\frac{\sigma^2_t/f_t^2}{\sum_{j=1}^{T-t}\lambda_j \E[ZI\{D\leq t\}]}.
\end{align*}
\end{theorem}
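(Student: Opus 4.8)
The plan is to analyze the random quantity $\widehat f_t - f_t$ via a joint central limit theorem, feed it through the delta method to control the product $\prod \widehat f_s$, and then multiply by the (a.s.-convergent, after normalization) factor $C^\alpha_{i,T-i+1}$ to obtain convergence in distribution to a scaled $\chi^2_1$. First I would fix $i\geq 2$ and write $g_t = \sum_{j=1}^{T-t} C^\alpha_{j,t}$ and $h_t = \sum_{j=1}^{T-t} X^\alpha_{j,t+1} = \sum_{j=1}^{T-t}(C^\alpha_{j,t+1}-C^\alpha_{j,t})$, so that $\widehat f_t - 1 = h_t/g_t$ and $\widehat f_t - f_t = (h_t - (f_t-1)g_t)/g_t$. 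By (GM4) and the strong law for the i.i.d.\ sequences, $g_t/\alpha \to \sum_{j=1}^{T-t}\lambda_j\E[ZI\{D\leq t\}]$ a.s. For the numerators, the key observation is that $h_t - (f_t-1)g_t = \sum_{j=1}^{T-t}\big(X^\alpha_{j,t+1} - (f_t-1)C^\alpha_{j,t}\big)$, and these summands across $t = T-i+1,\dots,T-1$ all involve only accident years $j\leq T-t \leq i-1$; after centering, $\alpha^{-1/2}$ times the vector of these sums is asymptotically normal by a CLT for the i.i.d.\ sums (using $\E[Z^2]<\infty$ from (GM1) and, for the random index, the a.s.\ convergence of $M^\alpha_j/\alpha$ together with an Anscombe-type argument or, in the special model, the compound Poisson CLT). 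Crucially each such centered sum has asymptotic mean zero because $\E[X^\alpha_{j,t+1}] = \E[M^\alpha_j]\E[ZI\{D=t+1\}]$ and $(f_t-1)\E[C^\alpha_{j,t}] = (f_t-1)\E[M^\alpha_j]\E[ZI\{D\leq t\}] = \E[M^\alpha_j]\E[ZI\{D=t+1\}]$ by the definition of $f_t$ in \eqref{eq:f_hat}.

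Next I would assemble this into a statement of the form $\sqrt\alpha\,(\widehat f_{T-i+1}-f_{T-i+1},\dots,\widehat f_{T-1}-f_{T-1}) \distr \Normal_{i-1}(0,\Sigma)$ for some covariance matrix $\Sigma$, where $\Sigma$ is diagonal in the independent-accident-year directions but may have off-diagonal entries because $\widehat f_s$ and $\widehat f_t$ share data from the overlapping accident years $j\leq \min(T-s,T-t)$. Then by the delta method applied to the map $(x_{T-i+1},\dots,x_{T-1})\mapsto \prod_{s} x_s$ evaluated at $(f_{T-i+1},\dots,f_{T-1})$,
\begin{align*}
\sqrt\alpha\bigg(\prod_{s=T-i+1}^{T-1}\widehat f_s - \prod_{s=T-i+1}^{T-1} f_s\bigg) \distr \Normal\big(0, v_i^2\big),
\end{align*}
with $v_i^2 = \nabla^\trans \Sigma \nabla$ and $\nabla_s = \prod_{r\neq s} f_r$. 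Squaring and multiplying by $C^\alpha_{i,T-i+1} = \alpha \cdot (C^\alpha_{i,T-i+1}/\alpha)$, and using Slutsky together with $C^\alpha_{i,T-i+1}/\alpha \to \lambda_i\E[ZI\{D\leq T-i+1\}]$ a.s.\ (from (GM4) and the SLLN, noting $T-i+1$ is the index of the latest observed cell in row $i$), gives
\begin{align*}
C^\alpha_{i,T-i+1}\bigg(\prod_{s}f_s - \prod_{s}\widehat f_s\bigg)^2 \distr \lambda_i\E[ZI\{D\leq T-i+1\}]\, v_i^2\, \chi^2_1,
\end{align*}
i.e.\ $\gamma_i^2 = \lambda_i\E[ZI\{D\leq T-i+1\}]\,v_i^2 > 0$. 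The independence used here — that $C^\alpha_{i,T-i+1}$ is independent of all the $\widehat f_s$, $s\geq T-i+1$ — is exactly the independence of accident years noted after \eqref{eq:msep_estimator}, so Slutsky applies cleanly even though the two factors are not asymptotically independent in a trivial way; only the normalized row-$i$ factor needs to converge in probability to a constant.

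The remaining task is the explicit evaluation of $\gamma_i^2$ under the assumption that $Z$ and $D$ are independent, matching the displayed formula. Here I would compute $\Sigma$ entrywise: the variance of $\alpha^{-1/2}\sum_{j=1}^{T-t}(X^\alpha_{j,t+1}-(f_t-1)C^\alpha_{j,t})$ divided by $(\lim g_t/\alpha)^2$ gives the asymptotic variance of $\sqrt\alpha(\widehat f_t - f_t)$, which after simplification using independence of $Z,D$ and the identity $\Var(\sum_1^{M}\xi_k) $-type computations should collapse to $\sigma_t^2 \big/ \sum_{j=1}^{T-t}\lambda_j\E[ZI\{D\leq t\}]$ with $\sigma_t^2$ as in Theorem~\ref{thm:estimation}; the off-diagonal terms require tracking the shared accident years but, after the same algebra and the telescoping structure of $X^\alpha_{j,t+1} = C^\alpha_{j,t+1}-C^\alpha_{j,t}$, the cross-covariances of $\widehat f_s$ and $\widehat f_t$ should vanish in the limit (this is the analogue of the classical fact that the $\widehat f_t$ are asymptotically uncorrelated), leaving $v_i^2 = \sum_s (\prod_{r\neq s}f_r)^2 \cdot \sigma_s^2/\sum_j \lambda_j\E[ZI\{D\leq s\}]$. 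Factoring $(\prod_r f_r)^2 = (\prod_r f_r)^2$ out of each term and writing $(\prod_{r\neq s}f_r)^2 = (\prod_r f_r)^2 / f_s^2$ yields precisely
\begin{align*}
\gamma_i^2 = \lambda_i\E[ZI\{D\leq T-i+1\}]\prod_{s=T-i+1}^{T-1}f_s^2\sum_{t=T-i+1}^{T-1}\frac{\sigma_t^2/f_t^2}{\sum_{j=1}^{T-t}\lambda_j\E[ZI\{D\leq t\}]}.
\end{align*}
The main obstacle I anticipate is the bookkeeping for the off-diagonal covariances: one must verify that, despite $\widehat f_s$ and $\widehat f_t$ depending on overlapping accident-year data, their asymptotic covariance is zero, and this relies on the martingale-like cancellation coming from the telescoping increments and the defining property of $f_t$ — getting the signs and the domains of summation exactly right in that step is where care is needed. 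The CLT for randomly-indexed compound sums (the Anscombe step) is routine but should be cited or sketched, since $M^\alpha_j$ is random; in the special model one can instead invoke the standard CLT for compound Poisson processes as $\alpha\to\infty$.
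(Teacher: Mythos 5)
Your proposal is correct and follows essentially the same route as the paper: the paper linearizes via Mack's exact telescoping identity $\prod_s f_s-\prod_s\widehat f_s=\sum_t \widehat f_{T-i+1}\cdots\widehat f_{t-1}(f_t-\widehat f_t)f_{t+1}\cdots f_{T-1}$ and then regroups the resulting double sum by accident year, rather than invoking the delta method on a joint CLT for $(\widehat f_t)_t$, but both arguments reduce to the same Anscombe-type CLT for centered compound sums over independent accident years, the same a.s.\ limits for the deterministic factors, and the same covariance matrix. One caution on your off-diagonal step: the asymptotic covariance of $\sqrt{\alpha}(\widehat f_s-f_s)$ and $\sqrt{\alpha}(\widehat f_t-f_t)$ for $s<t$ is proportional to $(f_t-1)\big((f_s-1)\E[Z^2I\{D\leq s\}]-\E[Z^2I\{D=s+1\}]\big)$, which does \emph{not} vanish by the defining (first-moment) property of $f_s$ alone but only because independence of $Z$ and $D$ upgrades $(f_s-1)\E[ZI\{D\leq s\}]=\E[ZI\{D=s+1\}]$ to its second-moment analogue — since you invoke diagonality only in the $Z$-independent-of-$D$ case where the explicit formula for $\gamma_i^2$ is claimed, your conclusion stands.
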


\begin{remark}\label{rem:msep_second_term}
Motivated by \eqref{eq:f_hat} and \eqref{eq:sigma_hat} we estimate $f_t$ by $\widehat{f}_t$ and $\sigma^2_t$ by $\widehat{\sigma}_t^2$. Since $\alpha^{-1}C^{\alpha}_{j,t}\as \lambda_j\E[ZI\{D\leq t\}]$ we estimate $\lambda_j\E[ZI\{D\leq t\}]$ by $\alpha^{-1}C^{\alpha}_{j,t}$. Hence, the estimator of $\gamma_i^2$ is 
\begin{align*}
\widehat{\gamma}_i^2=C^{\alpha}_{i,T-i+1}\prod_{s=T-i+1}^{T-1}\widehat{f}_s^2\sum_{t=T-i+1}^{T-1}\frac{\widehat{\sigma}^2_t/\widehat{f}_t^2}{\sum_{j=1}^{T-t}C^{\alpha}_{j,t}}.
\end{align*} 
Consequently, the estimator of 
\begin{align*}
\big(C^{\alpha}_{i,T-i+1}\big)^2\bigg(\prod_{s=T-i+1}^{T-1}f_s-\prod_{s=T-i+1}^{T-1}\widehat{f}_s\bigg)^2
\end{align*}
is $C^{\alpha}_{i,T-i+1}\widehat{\gamma}_i^2$ which equals 
\begin{align*}
(C^{\alpha}_{i,T-i+1})^2\prod_{s=T-i+1}^{T-1}\widehat{f}_s^2\sum_{t=T-i+1}^{T-1}\frac{\widehat{\sigma}^2_t/\widehat{f}_t^2}{\sum_{j=1}^{T-t}C^{\alpha}_{j,t}}.
\end{align*} 
and coincides with Mack's estimator (see \cite{Mack-93}, p.~219). 
\end{remark}
 
\begin{theorem}\label{thm:msep_first_term}
Consider the special model (SM1)-(SM3). 
For each $i\in\mathcal{T}$ with $i\geq 2$,  
\begin{align}
&\E\bigg[\frac{(C^{\alpha}_{i,T}-C^{\alpha}_{i,T-i+1}\prod_{s=T-i+1}^{T-1}f_s)^2}{C^{\alpha}_{T-i+1}} \mid \mathcal{D}^{\alpha}\bigg] \nonumber \\
&\quad=\frac{\alpha}{C^{\alpha}_{T-i+1}}\bigg(\frac{\E[M^{\alpha}_i]}{\alpha}\E[Z^2]\P(D>T-i+1)+(H^{\alpha})^2\bigg(\prod_{s=T-i+1}^{T-1}f_s-1\bigg)^2\bigg) \nonumber \\
&\quad\distr 
\frac{\E[Z^2]}{\E[Z]}\bigg(\bigg(\prod_{s=T-i+1}^{T-1}f_s-1\bigg)+\bigg(\prod_{s=T-i+1}^{T-1}f_s-1\bigg)^2\chi^2_1\bigg), \label{eq:msep_first_term_lim}  
\end{align}
where 
\begin{align*}
(H^{\alpha})^2=\frac{(C^{\alpha}_{i,T-i+1}-\E[C^{\alpha}_{i,T-i+1}])^2}{\alpha}
\distr \lambda_i\E[Z^2]\P(D\leq T-i+1)\chi^2_1=H^2.
\end{align*}
In particular, the expectation of the limit variable in \eqref{eq:msep_first_term_lim} is 
\begin{align}
&\frac{\lambda_i\E[Z^2]\P(D>T-i+1)+\E[H^2](\prod_{s=T-i+1}^{T-1}f_s-1)^2}{\lambda_i\E[Z]\P(D\leq T-i+1)} \nonumber \\
&\quad=\sum_{t=T-i+1}^{T-1}f_{T-i+1}\dots f_{t-1}\sigma^2_t f_{t+1}^2\dots f_{T-1}^2. \label{eq:msep_first_term_exp}
\end{align}
\end{theorem}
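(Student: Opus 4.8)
Throughout write $j:=T-i+1$ and $P:=\prod_{s=T-i+1}^{T-1}f_s$, and decompose the prediction residual as
$$C^{\alpha}_{i,T}-C^{\alpha}_{i,j}P=A^{\alpha}-C^{\alpha}_{i,j}(P-1),\qquad A^{\alpha}:=C^{\alpha}_{i,T}-C^{\alpha}_{i,j}=\sum_{k=1}^{M^{\alpha}_i}Z_{i,k}I\{D_{i,k}>j\}.$$
The first step is to record, exactly as in the variance computation in Section \ref{sec:dfcl} (which invokes Theorem 3.3.6 in \cite{Mikosch-09}), that under (SM2)--(SM3) the variable $A^{\alpha}$ is independent of $\sigma(C^{\alpha}_{i,1},\dots,C^{\alpha}_{i,j})$ and, by (GM3), also of the claims data of all other accident years; hence $A^{\alpha}$ is independent of $\mathcal{D}^{\alpha}$. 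Here $A^{\alpha}$ is a compound Poisson variable with $\E[A^{\alpha}]=\lambda_i\alpha\,\E[Z]\,\P(D>j)$ and $\var(A^{\alpha})=\lambda_i\alpha\,\E[Z^2]\,\P(D>j)$, using (SM2).

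The second step is the algebraic identity that makes the cross terms disappear. Since $D$ and $Z$ are independent, $f_s=\P(D\leq s+1)/\P(D\leq s)$, so $P=1/\P(D\leq j)$ and therefore
$$\E[C^{\alpha}_{i,j}](P-1)=\lambda_i\alpha\,\E[Z]\,\P(D\leq j)\cdot\frac{\P(D>j)}{\P(D\leq j)}=\lambda_i\alpha\,\E[Z]\,\P(D>j)=\E[A^{\alpha}].$$
Conditioning on $\mathcal{D}^{\alpha}$, on which $C^{\alpha}_{i,j}$ is measurable while $A^{\alpha}$ is independent, and completing the square,
$$\E\big[(A^{\alpha}-C^{\alpha}_{i,j}(P-1))^2\mid\mathcal{D}^{\alpha}\big]=\var(A^{\alpha})+\big(\E[A^{\alpha}]-C^{\alpha}_{i,j}(P-1)\big)^2=\var(A^{\alpha})+(P-1)^2\big(C^{\alpha}_{i,j}-\E[C^{\alpha}_{i,j}]\big)^2.$$
Dividing by $C^{\alpha}_{i,j}$, pulling out a factor $\alpha$ and using $\E[M^{\alpha}_i]/\alpha=\lambda_i$ gives the asserted exact identity with $(H^{\alpha})^2=\alpha^{-1}(C^{\alpha}_{i,j}-\E[C^{\alpha}_{i,j}])^2$.

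The third step is the limit. By the strong law for compound Poisson sums, $\alpha^{-1}C^{\alpha}_{i,j}\as\lambda_i\E[ZI\{D\leq j\}]=\lambda_i\E[Z]\P(D\leq j)>0$ (as used in Remark \ref{rem:msep_second_term}), so $\alpha/C^{\alpha}_{i,j}$ converges a.s.\ to the positive constant $(\lambda_i\E[Z]\P(D\leq j))^{-1}$. By the central limit theorem for compound Poisson variables as the Poisson parameter $\lambda_i\alpha\to\infty$ (established via characteristic functions), $\alpha^{-1/2}(C^{\alpha}_{i,j}-\E[C^{\alpha}_{i,j}])\distr\Normal_1(0,\lambda_i\E[Z^2]\P(D\leq j))$, whence $(H^{\alpha})^2\distr H^2=\lambda_i\E[Z^2]\P(D\leq j)\chi^2_1$. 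Because $\alpha/C^{\alpha}_{i,j}$ has a deterministic limit, Slutsky's theorem yields joint convergence of $((H^{\alpha})^2,\alpha/C^{\alpha}_{i,j})$, and the continuous mapping theorem applied to the right-hand side of the exact identity gives the limit
$$\frac{\lambda_i\E[Z^2]\P(D>j)+H^2(P-1)^2}{\lambda_i\E[Z]\P(D\leq j)}=\frac{\E[Z^2]}{\E[Z]}\Big((P-1)+(P-1)^2\chi^2_1\Big),$$
where $(P-1)=\P(D>j)/\P(D\leq j)$ was used; this is \eqref{eq:msep_first_term_lim}. Finally, for the ``in particular'' statement, $\E[\chi^2_1]=1$ gives that the expectation of the limit equals $\tfrac{\E[Z^2]}{\E[Z]}P(P-1)$; substituting $\E[H^2]=\lambda_i\E[Z^2]\P(D\leq j)$ shows this coincides with the first line of \eqref{eq:msep_first_term_exp}, and for the second line one uses that here $\sigma^2_t=(f_t-1)f_t\E[Z^2]/\E[Z]$ (since $D$ and $Z$ are independent) together with the telescoping identity
$$f_{T-i+1}\cdots f_{t-1}\,(f_t-1)f_t\,f_{t+1}^2\cdots f_{T-1}^2=c_t-c_{t+1},\qquad c_t:=(f_{T-i+1}\cdots f_{t-1})(f_t\cdots f_{T-1})^2,$$
so that the sum collapses to $c_{T-i+1}-c_T=P^2-P$.

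The whole argument is essentially bookkeeping; the only analytic input is the compound Poisson central limit theorem with a continuous rate parameter, and the one point requiring care is that $(H^{\alpha})^2$ and $\alpha/C^{\alpha}_{i,j}$ are strongly dependent (both are functions of $C^{\alpha}_{i,j}$), which is precisely why it matters that the latter has a constant limit so that Slutsky's theorem applies.
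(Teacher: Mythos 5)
Your proposal is correct and follows essentially the same route as the paper: the same decomposition of the residual into the future increment minus $(P-1)$ times the observed value, the same use of compound Poisson independence of increments to make the conditional expectation an unconditional variance plus a squared centering term, and the same Slutsky argument (the paper packages the independence and CLT input through its Corollary \ref{cor:renewal_clt} rather than invoking the compound Poisson CLT directly, but in the special model these coincide). Your explicit telescoping verification of \eqref{eq:msep_first_term_exp} is a small welcome addition that the paper leaves implicit.
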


\begin{remark}\label{rem:msep_first_term}
Since \eqref{eq:msep_first_term_exp} equals 
\begin{align*}
C^{\alpha}_{i.T-i+1}\prod_{s=T-i+1}^{T-1}f_s^2\sum_{t=T-i+1}^{T-1}\frac{\sigma^2_t/f_t^2}{C^{\alpha}_{i,T-i+1}\prod_{s=T-i+1}^{t-1}f_s}, 
\end{align*}
estimating $f_t$ by $\widehat{f}_t$ and $\sigma^2_t$ by $\widehat{\sigma}^2_t$ gives the estimator of \eqref{eq:msep_first_term_exp} given by 
\begin{align*}
C^{\alpha}_{i.T-i+1}\prod_{s=T-i+1}^{T-1}\widehat{f}_s^2\sum_{t=T-i+1}^{T-1}\frac{\widehat{\sigma}^2_t/\widehat{f}_t^2}{\widehat{C}^{\alpha}_{i,t}}. 
\end{align*}
Consequently, we estimate 
\begin{align*}
C^{\alpha}_{i,T-i+1}\E\bigg[\frac{(C^{\alpha}_{i,T}-C^{\alpha}_{i,T-i+1}\prod_{s=T-i+1}^{T-1}f_s)^2}{C^{\alpha}_{T-i+1}} \mid \mathcal{D}^{\alpha}\bigg] 
\end{align*}
by
\begin{align*}
(C^{\alpha}_{i.T-i+1})^2\prod_{s=T-i+1}^{T-1}\widehat{f}_s^2\sum_{t=T-i+1}^{T-1}\frac{\widehat{\sigma}^2_t/\widehat{f}_t^2}{\widehat{C}^{\alpha}_{i,t}}
\end{align*} 
which coincides with Mack's estimator (see \cite{Mack-93}, p.~218).
\end{remark}

\begin{remark}
Convergence of the conditional expectations considered in Theorem \ref{thm:msep_first_term} does not require the Poisson assumption for the counting variables. However, we have used the fact that $\E[M^{\alpha}_i]=\var(M^{\alpha}_i)$ to derive the limit in \eqref{eq:msep_first_term_lim}. If $\E[M^{\alpha}_i]$ and $\var(M^{\alpha}_i)$ would increase with $\alpha$ at rates that differ asymptotically, then  a limit corresponding to \eqref{eq:msep_first_term_lim} would look differently and consequently we would arrive at an estimator of conditional mean squared error of prediction that would differ from the one obtained by Mack in \cite{Mack-93}.  
\end{remark}

\begin{theorem}\label{thm:msep_third_term}
Consider the special model (SM1)-(SM3). 
Let 
\begin{align*}
A^{\alpha}_1&=\alpha^{-1/2}\E\bigg[C^{\alpha}_{i,T}-C^{\alpha}_{i,T-i+1}\prod_{s=T-i+1}^{T-1}f_s\mid \mathcal{D}^{\alpha}\bigg], \\
A^{\alpha}_2&=\alpha^{1/2}\bigg(\prod_{s=T-i+1}^{T-1}f_s-\prod_{s=T-i+1}^{T-1}\widehat{f}_s\bigg).
\end{align*}
Then $(A^{\alpha}_1)_{\alpha\geq 0}$ and $(A^{\alpha}_2)_{\alpha\geq 0}$ are independent and both converges in distribution to normally distributed random variables with zero means. In particular, $(A^{\alpha}_1A^{\alpha}_2)_{\alpha\geq 0}$ converges in distribution to a random variable with zero mean. 
\end{theorem}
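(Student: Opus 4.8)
The plan is to treat the three claims in turn — independence of $(A^{\alpha}_1)_{\alpha\ge 0}$ and $(A^{\alpha}_2)_{\alpha\ge 0}$, convergence of each to a centered normal law, and the conclusion for the product — the key preliminary step being an explicit closed form for $A^{\alpha}_1$.

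First I would simplify $A^{\alpha}_1$. Under (SM1)--(SM3), thinning of the Poisson process $(M^{\alpha}_i)_{\alpha\ge 0}$ makes $C^{\alpha}_{i,T-i+1}=\sum_{k}Z_{i,k}I\{D_{i,k}\le T-i+1\}$ and $C^{\alpha}_{i,T}-C^{\alpha}_{i,T-i+1}=\sum_{k}Z_{i,k}I\{D_{i,k}>T-i+1\}$ independent, and both are independent of the claims data from accident years $\ne i$; since $\mathcal{D}^{\alpha}$ sees accident year $i$ only through $C^{\alpha}_{i,T-i+1}$, the increment $C^{\alpha}_{i,T}-C^{\alpha}_{i,T-i+1}$ is independent of $\mathcal{D}^{\alpha}$ while $C^{\alpha}_{i,T-i+1}$ is $\mathcal{D}^{\alpha}$-measurable. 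Using in addition $\E[M^{\alpha}_i]=\lambda_i\alpha$, $\E[C^{\alpha}_{i,T-i+1}]=\lambda_i\alpha\E[Z]\P(D\le T-i+1)$ and, from $Z\perp D$, $\prod_{s=T-i+1}^{T-1}f_s=\P(D\le T-i+1)^{-1}$, one computes
\[
A^{\alpha}_1=\alpha^{-1/2}\Big(\prod_{s=T-i+1}^{T-1}f_s-1\Big)\big(\E[C^{\alpha}_{i,T-i+1}]-C^{\alpha}_{i,T-i+1}\big).
\]
A central limit theorem for the compound Poisson variable $C^{\alpha}_{i,T-i+1}$ (whose mean and variance are linear in $\alpha$) then gives $A^{\alpha}_1\distr\Normal\!\big(0,(\prod_{s}f_s-1)^2\lambda_i\E[Z^2]\P(D\le T-i+1)\big)$; this is just $(\prod_{s}f_s-1)$ times the limit of $H^{\alpha}$ in Theorem~\ref{thm:msep_first_term}, and in particular it is centered.

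Next I would handle $A^{\alpha}_2=\alpha^{1/2}\big(\prod_s f_s-\prod_s\widehat f_s\big)$. Here I would reuse the machinery behind Theorem~\ref{thm:msep_second_term}: a joint central limit theorem for the partial sums $\big(\sum_{j=1}^{T-t}C^{\alpha}_{j,t}\big)_{t}$ (each a sum of independent compound Poisson variables), followed by the delta method applied to $(x,y)\mapsto y/x$ to obtain joint asymptotic normality of $\alpha^{1/2}(\widehat f_t-f_t)$, and then to $(x_{T-i+1},\dots,x_{T-1})\mapsto\prod_s x_s$, which yields $A^{\alpha}_2\distr\Normal(0,v_i)$ for an explicit $v_i\in(0,\infty)$. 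Equivalently, combining Theorem~\ref{thm:msep_second_term} with $C^{\alpha}_{i,T-i+1}/\alpha\as\lambda_i\E[ZI\{D\le T-i+1\}]$ identifies the limit of $A^{\alpha}_2$ as a centered normal whose square carries the $\chi^2_1$ appearing there.

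Finally I would combine the pieces. For each fixed $\alpha$, $A^{\alpha}_2$ is a function of $\{C^{\alpha}_{j,t}:j\le i-1\}$ while $A^{\alpha}_1$ is a function of $C^{\alpha}_{i,T-i+1}$, so independence of accident years (GM3) gives $A^{\alpha}_1\perp A^{\alpha}_2$; factorizing characteristic functions and applying the L\'evy continuity theorem upgrades the two marginal limits to joint convergence $(A^{\alpha}_1,A^{\alpha}_2)\distr(A_1,A_2)$ with $A_1\perp A_2$. The continuous mapping theorem applied to $(x,y)\mapsto xy$ then gives $A^{\alpha}_1A^{\alpha}_2\distr A_1A_2$, and $\E[A_1A_2]=\E[A_1]\E[A_2]=0$. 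I expect the main obstacle to be the first step: pinning down the exact identity for $A^{\alpha}_1$, which relies on Poisson thinning to decouple the post-$(T-i+1)$ increment from $\mathcal{D}^{\alpha}$ and on $\E[M^{\alpha}_i]=\lambda_i\alpha$ to make the deterministic and random contributions recombine as $\E[C^{\alpha}_{i,T-i+1}]-C^{\alpha}_{i,T-i+1}$; without the Poisson assumption this collapse fails and the centered-normal limit — hence the match with Mack's estimator — is lost.
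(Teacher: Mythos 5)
Your proposal is correct and follows essentially the same route as the paper: you derive the same closed form $A^{\alpha}_1=\alpha^{-1/2}\big(\prod_s f_s-1\big)\big(\E[C^{\alpha}_{i,T-i+1}]-C^{\alpha}_{i,T-i+1}\big)$ (the paper writes the prefactor as $-\P(D>T-i+1)/\P(D\leq T-i+1)$, which is the same thing) and apply a CLT as in Corollary~\ref{cor:renewal_clt}, you reuse the Theorem~\ref{thm:msep_second_term} machinery for $A^{\alpha}_2$, and you conclude via independence of accident years, joint convergence, and the continuous mapping theorem. The only cosmetic difference is your delta-method phrasing for $A^{\alpha}_2$ versus the paper's explicit telescoping sum $\alpha^{1/2}\sum_t S_t$, which amounts to the same computation.
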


\begin{remark}\label{rem:msep_third_term}
By Theorem \ref{thm:msep_third_term} the third term \eqref{eq:msep_third_term} in the expression for the standardized mean squared error of prediction converges in distribution to a random variable with zero mean. Consequently, we estimate \eqref{eq:msep_third_term} by $0$. 
\end{remark}

\begin{theorem}\label{thm:renewal_clt}
Suppose that for each accident year $j$, $(M^{\alpha}_j)_{\alpha\geq 0}$ is a renewal counting process given by $M^{\alpha}_j=\sup\{m\geq 1:T_{j,m}\leq \alpha\}$, where the steps $Y_{j,k}$ of the random walk $T_{j,m}=\sum_{k=1}^m Y_{j,k}$ satisfies $\E[Y_{j,k}]=1/\lambda_j$ and $\var(Y_{j,k})<\infty$. Then  
\begin{align*}
S_j^{\alpha}=\sum_{k=1}^{M^{\alpha}_j}Z_{j,k}\big(I\{D_{j,k}=1\},\dots,I\{D_{j,k}=T\}\big)
\end{align*}
satisfies $\alpha^{-1/2}(S_j^{\alpha}-\E[S_j^{\alpha}])\distr \Normal_T(0,\Sigma)$, where 
\begin{align*}
\Sigma_{s,t}&=\lambda_j\E[Z^2I\{D=s\}I\{D=t\}]\\
&\quad+\lambda_j(\lambda_j^2\var(Y)-1)\E[ZI\{D=s\}]\E[ZI\{D=t\}].
\end{align*}
\end{theorem}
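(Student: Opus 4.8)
The plan is to reduce the statement to a one–dimensional renewal–reward central limit theorem and then to prove the latter by splitting the randomly indexed sum $S^\alpha_j$ into a ``fluctuation of the summands'' part and a ``fluctuation of the index'' part, which turn out to be asymptotically independent. Write $\vec W_{j,k}=Z_{j,k}\big(I\{D_{j,k}=1\},\dots,I\{D_{j,k}=T\}\big)$, so that $(\vec W_{j,k})_{k\ge1}$ are i.i.d.\ $\R^T$–valued vectors with mean $\vec\mu=(\E[ZI\{D=1\}],\dots,\E[ZI\{D=T\}])$ and covariance matrix $\Gamma$, $\Gamma_{s,t}=\E[Z^2I\{D=s\}I\{D=t\}]-\mu_s\mu_t$ (finite by (GM1)); as in the general model this sequence is independent of $(Y_{j,k})_{k\ge1}$, hence of the whole process $(M^\alpha_j)_{\alpha\ge0}$. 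Since $S^\alpha_j=\sum_{k=1}^{M^\alpha_j}\vec W_{j,k}$ and, conditioning on $M^\alpha_j$, $\E[S^\alpha_j]=\E[M^\alpha_j]\vec\mu$, we have the decomposition
\[
S^\alpha_j-\E[S^\alpha_j]=\underbrace{\sum_{k=1}^{M^\alpha_j}(\vec W_{j,k}-\vec\mu)}_{=:R^\alpha}+\big(M^\alpha_j-\E[M^\alpha_j]\big)\vec\mu .
\]

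The first ingredient is the classical renewal CLT: from the random–walk CLT $m^{-1/2}(T_{j,m}-m/\lambda_j)\distr\Normal(0,\var(Y))$ and the duality $\{M^\alpha_j\ge m\}=\{T_{j,m}\le\alpha\}$ one obtains $\alpha^{-1/2}(M^\alpha_j-\lambda_j\alpha)\distr\Normal(0,\lambda_j^3\var(Y))$ (insert $m=\lceil\lambda_j\alpha+x\sqrt{\lambda_j^3\var(Y)\,\alpha}\,\rceil$ and let $\alpha\to\infty$); combined with $\E[M^\alpha_j]-\lambda_j\alpha=o(\sqrt\alpha)$, a consequence of $\var(Y)<\infty$, this gives $\alpha^{-1/2}(M^\alpha_j-\E[M^\alpha_j])\distr\Normal(0,\lambda_j^3\var(Y))$. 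Note also that the renewal assumption implies (GM4), so $M^\alpha_j/\alpha\as\lambda_j$.

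The crux is the \emph{joint} convergence, i.e.\ the asymptotic independence of $R^\alpha$ and $M^\alpha_j$, together with the limit law of $\alpha^{-1/2}R^\alpha$. Condition on $\calG_j:=\sigma(Y_{j,k}:k\ge1)$: given $\calG_j$ the index $M^\alpha_j$ is deterministic and $R^\alpha$ is a sum of $M^\alpha_j$ i.i.d.\ centered vectors independent of $\calG_j$, so for fixed $\vec a\in\R^T$ and $b\in\R$ the conditional characteristic function of $\big(\alpha^{-1/2}R^\alpha,\ \alpha^{-1/2}(M^\alpha_j-\E[M^\alpha_j])\big)$ factorizes, with first factor $\varphi_{\vec W-\vec\mu}(\alpha^{-1/2}\vec a)^{M^\alpha_j}$. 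A second–order Taylor expansion of $\varphi_{\vec W-\vec\mu}$ at $0$ together with $M^\alpha_j/\alpha\to\lambda_j$ a.s.\ shows this factor converges a.s.\ to the \emph{deterministic} limit $\exp(-\tfrac12\lambda_j\,\vec a^{\trans}\Gamma\vec a)$; since it is bounded by $1$, dominated convergence yields that the unconditional characteristic function converges to $\exp(-\tfrac12\lambda_j\,\vec a^{\trans}\Gamma\vec a)\cdot\exp(-\tfrac12 b^2\lambda_j^3\var(Y))$. Hence $\alpha^{-1/2}R^\alpha\distr\Normal_T(0,\lambda_j\Gamma)$ and is asymptotically independent of $\alpha^{-1/2}(M^\alpha_j-\E[M^\alpha_j])\distr\Normal(0,\lambda_j^3\var(Y))$.

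Adding the two pieces of the decomposition gives $\alpha^{-1/2}(S^\alpha_j-\E[S^\alpha_j])\distr\Normal_T\big(0,\ \lambda_j\Gamma+\lambda_j^3\var(Y)\,\vec\mu\vec\mu^{\trans}\big)$, and a one–line check confirms that $\lambda_j\Gamma_{s,t}+\lambda_j^3\var(Y)\mu_s\mu_t=\lambda_j\E[Z^2I\{D=s\}I\{D=t\}]+\lambda_j(\lambda_j^2\var(Y)-1)\mu_s\mu_t=\Sigma_{s,t}$. The only non-routine step is the asymptotic-independence argument in the previous paragraph; everything else is classical renewal theory and bookkeeping. Alternatively, one may apply the Cramér–Wold device to $\alpha^{-1/2}(S^\alpha_j-\E[S^\alpha_j])$, reducing the whole claim to the scalar renewal–reward CLT for the reward sequence $\vec a^{\trans}\vec W_{j,k}$ with cycle lengths $Y_{j,k}$, which can be cited from the literature on stopped random walks.
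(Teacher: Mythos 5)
Your proof is correct, and it handles the crucial step by a genuinely different device than the paper, although the starting decomposition is the same: both you and the paper split $S^{\alpha}_j-\E[S^{\alpha}_j]$ into the centered reward sum $R^{\alpha}=\sum_{k=1}^{M^{\alpha}_j}(\vec W_{j,k}-\vec\mu)$ plus $\vec\mu$ times the fluctuation of the counting process, using $\E[M^{\alpha}_j]-\lambda_j\alpha=o(\sqrt{\alpha})$ to recenter. Where you diverge is in how the joint behavior of the two pieces is controlled. The paper invokes the identity $M^{\alpha}_j-\lambda_j T_{j,M^{\alpha}_j}=\sum_{k=1}^{M^{\alpha}_j}(1-\lambda_j Y_{j,k})$ (via (2.41) in Embrechts--Kl\"uppelberg--Mikosch) to turn the index fluctuation into another stopped sum with the \emph{same} random index, so that the whole centered quantity becomes, up to $o_{\P}(\sqrt{\alpha})$, a single stopped random walk $\sum_{k=1}^{M^{\alpha}_j}(\vec W_{j,k}-\lambda_j\E[\vec W]Y_{j,k})$ of i.i.d.\ terms; Anscombe's theorem (Theorem 1.1 in Gut) then delivers the limit in one stroke, with $\Sigma=\lambda_j\cov(\vec W-\lambda_j\E[\vec W]Y)$, the cross terms vanishing automatically because the rewards are independent of the cycle lengths. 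You instead keep the two pieces separate and prove their asymptotic independence directly, by conditioning on $\sigma(Y_{j,k}:k\ge 1)$, factoring the joint characteristic function, and using the a.s.\ convergence $M^{\alpha}_j/\alpha\to\lambda_j$ together with bounded convergence to pass to the limit. Your route is more self-contained (it needs neither the EKM identity nor the vector Anscombe theorem as a black box, only the scalar renewal CLT) and it makes transparent \emph{why} the reward and index fluctuations decouple; the paper's route is shorter and packages the covariance computation into a single application of a standard stopped-random-walk result. Both arguments rest on the same two external facts --- independence of $(\vec W_{j,k})$ from $(Y_{j,k})$ and $\E[M^{\alpha}_j]=\lambda_j\alpha+o(\sqrt{\alpha})$ under $\var(Y)<\infty$ --- and both yield the stated $\Sigma$.
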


\begin{corollary}\label{cor:renewal_clt}
Consider the setting of Theorem \ref{thm:renewal_clt}. Let 
\begin{align*}
H^{\alpha}&=\alpha^{-1/2}\big(C^{\alpha}_{i,T-i+1}-\E[C^{\alpha}_{i,T-i+1}]\big), \\
F^{\alpha}&=\alpha^{-1/2}\big(C^{\alpha}_{i,T}-C^{\alpha}_{i,T-i+1}-\E[C^{\alpha}_{i,T}-C^{\alpha}_{i,T-i+1}]\big). 
\end{align*}
Then $(H^{\alpha},F^{\alpha})\distr (H,F)$, where $(H,F)$ is jointly normally distributed with 
\begin{align*}
\var(H)&=\lambda_j \E[Z^2I\{D\leq t\}]+\lambda_j(\lambda_j^2\var(Y)-1)\E[ZI\{D\leq t\}]^2,\\
\var(F)&=\lambda_j \E[Z^2I\{D> t\}]+\lambda_j(\lambda_j^2\var(Y)-1)\E[ZI\{D> t\}]^2,\\
\cov(H,F)&=\lambda_j(\lambda_j^2\var(Y)-1)\E[ZI\{D\leq t\}]\E[ZI\{D> t\}].
\end{align*}
\end{corollary}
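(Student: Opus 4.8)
The plan is to recognise $H^\alpha$ and $F^\alpha$ as fixed linear images of the centred vector appearing in Theorem~\ref{thm:renewal_clt} and then to invoke that theorem together with the continuous mapping theorem; computing the limiting covariance matrix is afterwards a short algebraic exercise. (The accident year here is $i$, so one reads $j=i$ and $t=T-i+1$ in the statement.)

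First I would express the accumulated amounts through the coordinates of $S^\alpha_i$. Writing $(S^\alpha_i)_s=\sum_{k=1}^{M^\alpha_i}Z_{i,k}I\{D_{i,k}=s\}$ for $s\in\mathcal{T}$ and $t=T-i+1$, and using that $D_{i,k}\in\mathcal{T}$ always, we get $C^\alpha_{i,T-i+1}=\sum_{s=1}^{t}(S^\alpha_i)_s$ and $C^\alpha_{i,T}-C^\alpha_{i,T-i+1}=\sum_{s=t+1}^{T}(S^\alpha_i)_s$. Hence $(H^\alpha,F^\alpha)^{\trans}=\alpha^{-1/2}A\big(S^\alpha_i-\E[S^\alpha_i]\big)$, where $A$ is the deterministic $2\times T$ matrix whose first row is $(1,\dots,1,0,\dots,0)$ with $t$ ones followed by $T-t$ zeros, and whose second row is $(0,\dots,0,1,\dots,1)$ with $t$ zeros followed by $T-t$ ones.

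By Theorem~\ref{thm:renewal_clt}, $\alpha^{-1/2}(S^\alpha_i-\E[S^\alpha_i])\distr\Normal_T(0,\Sigma)$, so applying the continuous mapping theorem to the linear map $x\mapsto Ax$ gives $(H^\alpha,F^\alpha)\distr\Normal_2(0,A\Sigma A^{\trans})$. It then remains to evaluate the three entries $\var(H)=\sum_{s,u\le t}\Sigma_{s,u}$, $\var(F)=\sum_{s,u>t}\Sigma_{s,u}$ and $\cov(H,F)=\sum_{s\le t,\,u>t}\Sigma_{s,u}$. Substituting the expression for $\Sigma_{s,u}$ from Theorem~\ref{thm:renewal_clt} and using $I\{D=s\}I\{D=u\}=I\{D=s\}I\{s=u\}$, the $Z^2$-part of $\Sigma_{s,u}$ is supported on the diagonal and contributes $\lambda_i\E[Z^2I\{D\le t\}]$, $\lambda_i\E[Z^2I\{D>t\}]$ and $0$ to the three entries respectively, while the rank-one part factors through $\sum_{s\le t}\E[ZI\{D=s\}]=\E[ZI\{D\le t\}]$ and $\sum_{s>t}\E[ZI\{D=s\}]=\E[ZI\{D>t\}]$ and produces the $\lambda_i(\lambda_i^2\var(Y)-1)$ terms. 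Adding the two contributions yields the stated formulas.

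There is no genuine obstacle here; the work is essentially bookkeeping. The only steps worth double-checking are the partial-sum identities $C^\alpha_{i,T-i+1}=\sum_{s=1}^{t}(S^\alpha_i)_s$ and $C^\alpha_{i,T}-C^\alpha_{i,T-i+1}=\sum_{s=t+1}^{T}(S^\alpha_i)_s$ (purely a matter of definitions), and the correct use of the block and diagonal structure of $\Sigma$ when summing over the index blocks $\{1,\dots,t\}$ and $\{t+1,\dots,T\}$.
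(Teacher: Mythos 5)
Your argument is correct and is exactly the intended one: the paper gives no separate proof of Corollary \ref{cor:renewal_clt}, treating it as an immediate consequence of Theorem \ref{thm:renewal_clt} via the linear map $x\mapsto Ax$ and the resulting covariance $A\Sigma A^{\trans}$, which is what you carry out. Your bookkeeping of the diagonal $Z^2$-part and the rank-one part of $\Sigma$ matches the stated formulas (with $j=i$ and $t=T-i+1$ as you note).
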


\begin{remark}\label{rem:renewal_clt}
If $(M^{\alpha}_j)_{\alpha\geq 0}$ is a homogeneous Poisson process, then $\var(Y)=\lambda_j^{-2}$, the random vectors $S_j^{\alpha}$ in Theorem \ref{thm:renewal_clt} have independent components, and $H^{\alpha}$ and $F^{\alpha}$ in Corollary \ref{cor:renewal_clt} are independent. 
\end{remark}

\begin{remark}\label{rem:msep_joint_convergence}
Theorems \ref{thm:msep_second_term}, \ref{thm:msep_first_term} and \ref{thm:msep_third_term} show convergence in distribution separately for the three terms \eqref{eq:msep_first_term}, \eqref{eq:msep_second_term} and \eqref{eq:msep_third_term} of conditional mean squared error of prediction. We treat them separately since we want to emphasize that convergence to the appropriate limits occurs under different assumptions; only for two of the terms we use the compound Poisson assumption of the special model. However, the sum of the  terms converges in distribution under the assumptions made in Theorem \ref{thm:msep_first_term}. This convergence of the sum is a consequence of the convergence in distribution of the random vectors 
$\alpha^{-1/2}(S_j^{\alpha}-\E[S_j^{\alpha}])$ in Theorem \ref{thm:renewal_clt}. 
That the convergence in distribution in Theorems \ref{thm:msep_second_term}, \ref{thm:msep_first_term} and \ref{thm:msep_third_term} can be extended to joint convergence in distribution can then be verified by combining the convergence of $\alpha^{-1/2}(S_j^{\alpha}-\E[S_j^{\alpha}])$ in Theorem  \ref{thm:renewal_clt} with an application of the continuous mapping theorem for weak convergence together with Slutsky's theorem. Such an argument verifies that
\begin{align*}
\E\bigg[\frac{(C^{\alpha}_{i,T}-C^{\alpha}_{i,T-i+1}\prod_{s=T-i+1}^{T-1}\widehat{f}_s)^2}{C^{\alpha}_{T-i+1}} \mid \mathcal{D}^{\alpha}\bigg] 
\distr L=L^{(1)}+L^{(2)}+L^{(3)},
\end{align*} 
where $L^{(1)}$, $L^{(2)}$ and $L^{(3)}$ correspond to the limits in Theorems \ref{thm:msep_second_term}, \ref{thm:msep_first_term} and \ref{thm:msep_third_term}. 
\end{remark}

\section{Numerical illustration}\label{sec:numeric_example}

In the setting of the special model, we may simulate a run-off triangle $\{C^{\alpha}_{j,t}:j,t\in\mathcal{T},j+t\leq T+1\}$ and explicitly compute the standardized conditional mean squared error of prediction (standardized means division by $C^{\alpha}_{T-i+1}$) in \eqref{eq:convtoL} as a known function of the simulated run-off triangle. For the same run-off triangle, we may compute the standardized estimator of mean squared error by Mack,
\begin{align}\label{eq:MackMSE}
\widehat{L}^{\alpha}=\frac{(\widehat{C}^{\alpha}_{i,T})^2}{C^{\alpha}_{i,T-i+1}}
\sum_{t=T-i+1}^{T-1}\frac{\widehat{\sigma}^2_t}{\widehat{f}^2_t}
\bigg(\frac{1}{\widehat{C}^{\alpha}_{i,t}}+\frac{1}{\sum_{j=1}^{T-t}C^{\alpha}_{j,t}}\bigg),
\end{align}
and then compare the two random variables, or their distributions. 

We first show how to explicitly compute the standardized conditional mean squared error of prediction. 
Since $C^{\alpha}_{i,T}=C^{\alpha}_{i,T-i+1}+\sum_{k=1}^{N^{\alpha}}Z_k$ with $N^{\alpha}\sim\Pois(\alpha\lambda_i\sum_{t=T-i+2}^{T}q_t)$ independent of the i.i.d. sequence $(Z_k)$, and 
\begin{align*}
\E\bigg[\sum_{k=1}^{N^{\alpha}}Z_k\bigg]&=\E[N^{\alpha}]\E[Z], \\
\E\bigg[\bigg(\sum_{k=1}^{N^{\alpha}}Z_{k}\bigg)^2\bigg]&=\E[N^{\alpha}]\var(Z)+\E[N^{\alpha}]\E[Z]^2+\E[N^{\alpha}]^2\E[Z]^2,
\end{align*}
we may use the independence between $\sum_{k=1}^{N^{\alpha}}Z_k$ and $\mathcal{D}^{\alpha}$ to get 
\begin{align}
L^{\alpha}&=\E\bigg[\frac{(C^{\alpha}_{i,T}-C^{\alpha}_{i,T-i+1}\prod_{s=T-i+1}^{T-1}\widehat{f}_s)^2}{C^{\alpha}_{T-i+1}} \mid \mathcal{D}^{\alpha}\bigg] \label{eq:RealPoissonMSE}\\
&=(C^{\alpha}_{T-i+1})^{-1}\E\bigg[\bigg(\sum_{k=1}^{N^{\alpha}}Z_{k}\bigg)^2\bigg]-2\bigg(\prod_{s=T-i+1}^{T-1}\widehat{f}_s-1\bigg)\E\bigg[\sum_{k=1}^{N^{\alpha}}Z_k\bigg] \nonumber \\
&\quad+C^{\alpha}_{i,T-i+1}\bigg(\prod_{s=T-i+1}^{T-1}\widehat{f}_s-1\bigg)^2. \nonumber
\end{align}

From Theorems \ref{thm:msep_second_term}, \ref{thm:msep_first_term} and \ref{thm:msep_third_term} together with Remark \ref{rem:msep_joint_convergence} we know that $L^{\alpha}\distr L$ and we may compute $\E[L]$ explicitly. We have not shown convergence in distribution for $\widehat{L}^{\alpha}$ but it follows from Theorem \ref{thm:estimation} and Slutsky's theorem that each term in the expression for $\widehat{L}^{\alpha}$ converges in distribution, and the corresponding expectations of the limits add up to $\E[L]$. Hence, if we draw many realizations of run-off triangles based on the special model, and convert these into a random sample from the distribution of $L^{\alpha}-\widehat{L}^{\alpha}$, then we expect the empirical mean to be approximately zero.  

For the numerical illustration, we take the claims data from Table 1 in Mack \cite{Mack-93}, originally presented by Taylor and Ashe \cite{Taylor-Ashe-83}, in order to choose values for the model parameters of exposure and distribution of delay. Applying the formula from Remark \ref{rem:f_to_q}, we can transform the development factors $\widehat{f}_t$ corresponding to Table 1 in \cite{Mack-93} into
$$
(\widehat{q}_{t})_{t=1}^{T} = (0.069, 0.172, 0.180, 0.194,  0.107, 0.075, 0.069, 0.047, 0.070, 0.018).
$$
For the exposures, we simply use the first column of the run-off triangle in Mack (1993) and normalize it by dividing by its first entry (this procedure suffices for illustration, more sophisticated estimation could be considered). This yields
$$
(\widehat{\lambda}_i)_{i=1}^T = (1.000, 0.984, 0.812, 0.868, 1.239, 1.107, 1.230, 1.005, 1.053, 0.961)
$$
across accident years. 
For simplicity, we choose $Z \equiv 1$ and $\alpha = 4,000,000$, which roughly corresponds to the order of magnitude as can be found in  \cite{Mack-93}. 
We generate $100,000$ realizations of run-off triangles and for each one compute both the true standardized conditional mean squared error \eqref{eq:RealPoissonMSE}, as well as the standardized version of Mack's estimator of conditional mean squared error \eqref{eq:MackMSE} for accident years $i=3, 5$ and $8$. The results can be seen in Figure \ref{fig:hist}. The results are not sensitive to the value chosen for $\alpha$, the histograms in Figure \ref{fig:hist} are essentially indistinguishable from those with $\alpha=10,000$.

\begin{figure}[htp]
\centering
\includegraphics[width=.333\textwidth]{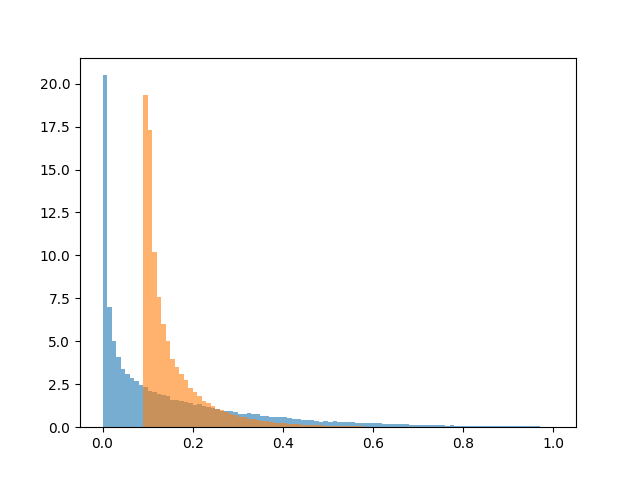}\hfill
\includegraphics[width=.333\textwidth]{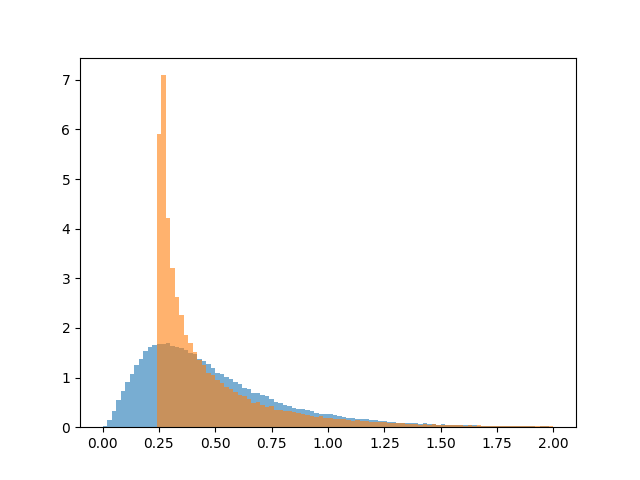}\hfill
\includegraphics[width=.333\textwidth]{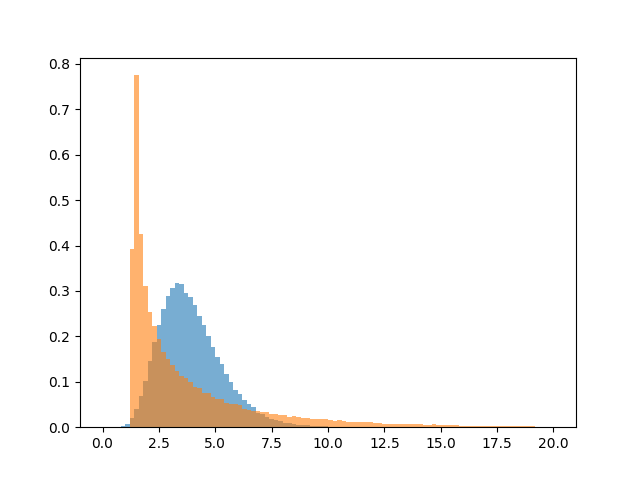}
\caption{Blue histograms: standardized Mack's estimator \eqref{eq:MackMSE} of conditional mean squared error. Orange histograms: true standardized conditional mean squared error \eqref{eq:RealPoissonMSE}. The three plots shown correspond to accident years $i=3,5,8$ from left to right.}
\label{fig:hist}
\end{figure}

\section{proofs}\label{sec:proofs}

Before the proof of Theorem \ref{thm:estimation} we state a result, on stochastic representations of norms of multivariate normal random vectors, that will be used in the proof of Theorem \ref{thm:estimation}. 

\begin{lemma}\label{lem:normofgaussian}
If $W\sim \Normal_n(0,\Sigma)$, then $W^{\trans}W\eqdis \sum_{i=1}^n \mu_iQ^2_i$, where $Q_1,\dots,Q_n$ are independent and standard normal and $\mu_1,\dots,\mu_n$ are the eigenvalues of $\Sigma$. 
\end{lemma}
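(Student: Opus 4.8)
The statement is the classical spectral decomposition of a quadratic form in a Gaussian vector, so the plan is to diagonalize $\Sigma$ by an orthogonal transformation and track how $W^\trans W$ transforms. Since $\Sigma$ is symmetric and positive semidefinite, the spectral theorem gives an orthogonal matrix $O$ (so $O^\trans O = O O^\trans = I$) and a diagonal matrix $\Lambda = \diag(\mu_1,\dots,\mu_n)$ with the eigenvalues of $\Sigma$ on the diagonal such that $\Sigma = O \Lambda O^\trans$. I would then set $V = O^\trans W$ and observe that, as a linear image of a centered Gaussian, $V \sim \Normal_n(0, O^\trans \Sigma O) = \Normal_n(0,\Lambda)$. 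Hence the components $V_1,\dots,V_n$ of $V$ are jointly Gaussian with zero means and covariance matrix $\Lambda$, i.e.\ they are independent with $V_i \sim \Normal(0,\mu_i)$.

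The second step is the norm-invariance: because $O$ is orthogonal, $W^\trans W = W^\trans O O^\trans W = (O^\trans W)^\trans (O^\trans W) = V^\trans V = \sum_{i=1}^n V_i^2$. Finally, writing $V_i = \sqrt{\mu_i}\, Q_i$ with $Q_i \sim \Normal(0,1)$ (interpreting $\sqrt{\mu_i}\,Q_i$ as the degenerate constant $0$ when $\mu_i = 0$), the $Q_i$ are independent standard normals and $\sum_{i=1}^n V_i^2 = \sum_{i=1}^n \mu_i Q_i^2$. This gives the claimed distributional identity $W^\trans W \eqdis \sum_{i=1}^n \mu_i Q_i^2$.

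I do not anticipate a genuine obstacle here — the result is standard linear algebra plus the elementary fact that linear images of Gaussian vectors are Gaussian with the transformed covariance. The only points requiring a word of care are: (i) noting that $\Sigma$ need not be invertible, so some $\mu_i$ may vanish and the corresponding $V_i$ is a.s.\ zero, which is why the conclusion is stated as an equality in distribution and the representation still holds with those terms contributing $0$; and (ii) being explicit that "the eigenvalues of $\Sigma$" are listed with multiplicity, matching the diagonal of $\Lambda$. Both are handled by the spectral theorem as stated.
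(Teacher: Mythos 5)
Your proof is correct. It takes a mildly different route from the paper's: you apply the spectral theorem directly to $\Sigma$, writing $\Sigma = O\Lambda O^{\trans}$, rotate the vector $W$ itself to $V = O^{\trans}W \sim \Normal_n(0,\Lambda)$, and exploit the pathwise identity $W^{\trans}W = V^{\trans}V$ coming from orthogonality. The paper instead factorizes $\Sigma = LL^{\trans}$, represents $W \eqdis LQ$ with $Q \sim \Normal_n(0,I)$, diagonalizes $L^{\trans}L = O^{\trans}DO$, and uses the rotational invariance $OQ \eqdis Q$ of the standard normal together with the fact that $L^{\trans}L$ and $LL^{\trans}$ share the same eigenvalues. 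Your version is slightly more economical: every step up to the final substitution $V_i = \sqrt{\mu_i}\,Q_i$ is an almost sure equality rather than an equality in distribution, and you avoid invoking the (true but not entirely free) fact that $L^{\trans}L$ and $LL^{\trans}$ have identical spectra. Your explicit remarks on the possibly singular case ($\mu_i = 0$ giving a degenerate component) and on listing eigenvalues with multiplicity are points the paper leaves implicit; both treatments are sound.
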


\begin{proof}[Proof of Lemma \ref{lem:normofgaussian}]
Write $\Sigma=LL^{\trans}$ and note that $W\eqdis LQ$ with $Q\sim \Normal_n(0,I)$. Hence, $W^{\trans}W\eqdis Q^{\trans}L^{\trans}LQ$. The matrix $L^{\trans}L$ is orthogonally diagonizable and has the same eigenvalues as $\Sigma=LL^{\trans}$. Write $L^{\trans}L=O^{\trans}D O$, where $O$ is orthogonal and $D=\diag(\mu_1,\dots,\mu_n)$. Hence, 
\begin{align*}
W^{\trans}W\eqdis Q^{\trans}L^{\trans}LQ\eqdis Q^{\trans}O^{\trans}DOQ\eqdis Q^{\trans}DQ
\end{align*}
since $OQ\eqdis Q$. 
\end{proof}

\begin{proof}[Proof of Theorem \ref{thm:estimation}]
We first prove \eqref{eq:f_hat}. 
Note that, for $1\leq i_0<i_1\leq T$, using Theorem 2.1 in \cite{Gut-2009}, 
\begin{align*}
\frac{1}{\alpha}\sum_{i=i_0}^{i_1}C^{\alpha}_{i,t+1}
&=\sum_{i=i_0}^{i_1}\frac{M^{\alpha}_i}{\alpha}\frac{1}{M^{\alpha}_i}\sum_{k=1}^{M^{\alpha}_i}Z_{i,k}I\{D_{i,k}\leq t+1\} \\
&\as \E[ZI\{D\leq t+1\}]\sum_{i=i_0}^{i_1}\lambda_i. 
\end{align*}
Consequently, 
\begin{align*}
\frac{\sum_{i=i_0}^{i_1}C^{\alpha}_{i,t+1}}{\sum_{i=i_0}^{i_1}C^{\alpha}_{i,t}} \as \frac{\E[ZI\{D\leq t+1\}]}{\E[ZI\{D\leq t\}]}. 
\end{align*}
In order to prove \eqref{eq:clpredictor}, Note that, similarly to the above,  
\begin{align*}
\frac{C^{\alpha}_{i,T}}{C^{\alpha}_{i,t}}\as \frac{\E[ZI\{D\leq T\}]}{\E[ZI\{D\leq t\}]}
\quad\text{and}\quad 
\prod_{s=t}^{T-1}\widehat{f}_s \as \frac{\E[ZI\{D\leq T\}]}{\E[ZI\{D\leq t\}]}.
\end{align*}
We proceed to the more involved task of proving \eqref{eq:sigma_hat}. 
For $j=i_0,\dots,i_1$, let 
\begin{align*}
 W^{\alpha}_j=\frac{
 \alpha^{-1/2} \Big((C^{\alpha}_{j,t+1} - C^{\alpha}_{j,t})  - \frac{\sum_{i=i_0}^{i_1} (C^{\alpha}_{i,t+1} - C^{\alpha}_{i,t})}{\sum_{i=i_0}^{i_1}C^{\alpha}_{i,t}} C^{\alpha}_{j,t} \Big)
}{
(\alpha^{-1}C^{\alpha}_{j,t})^{1/2}
}.
\end{align*}
Some algebra shows that 
\begin{align*}
\big(W^{\alpha}_j\big)^2=C^{\alpha}_{i,t}\bigg(\frac{C^{\alpha}_{i,t+1}}{C^{\alpha}_{i,t}}-\widehat{f}_{t}\bigg)^2
\end{align*}
i.e.~the $j$th term in the sum in the expression for $\widehat{\sigma}^2_t$. 
The numerator of $W_j^\alpha$ can be written as 
\begin{align*}
\begin{split}
&\left( 1 - \frac{
C^{\alpha}_{j,t}
}{
\sum_{i=i_0}^{i_1} C^{\alpha}_{i,t}
}\right)
\alpha^{-1/2}\sum_{k=1}^{M_j^\alpha} Z_{j,k}\left( I\{D_{j,k} = t+1\} - \frac{\E[ZI\{D= t+1\}]}{\E[ZI\{D\leq t\}]} I\{D_{j,k} \leq t\} \right) \\
 &-\frac{
C^{\alpha}_{j,t}
}{
\sum_{i=i_0}^{i_1} C^{\alpha}_{i,t}
}
\alpha^{-1/2} \sum_{i=i_0, i \neq j}^{i_1} \sum_{k=1}^{M_i^\alpha} Z_{i,k}\left( I\{D_{i,k} = t+1\} - \frac{\E[ZI\{D= t+1\}]}{\E[ZI\{D\leq t\}]} I\{D_{i,k} \leq t\} \right) . 
\end{split}
\end{align*}
We can now write $W^{\alpha}=B^{\alpha}U^{\alpha}$, where 
\begin{align*}
U_{j}^\alpha = \alpha^{-1/2}\sum_{k=1}^{M_j^\alpha} Z_{k, j}\bigg( I\{D_{j,k} = t+1\} - \frac{\E[ZI\{D= t+1\}]}{\E[ZI\{D\leq t\}]} I\{D_{j,k} \leq t\} \bigg) 
\end{align*}
and $B^\alpha$ is a square matrix with entries 
$$
B_{j, l}^\alpha = 
\begin{cases}
\left( \alpha^{-1} C^{\alpha}_{j,t} \right)^{-1/2}\left( 1 - \frac{
C^{\alpha}_{j,t}
}{
\sum_{i=i_0}^{i_1} C^{\alpha}_{i,t}
}\right), &j = l \\
\left( \alpha^{-1} C^{\alpha}_{j,t} \right)^{-1/2}\left( - \frac{
C^{\alpha}_{j,t}
}{
\sum_{i=i_0}^{i_1} C^{\alpha}_{i,t}
}\right), &j \neq l.
\end{cases}
$$
The multivariate Central Limit Theorem together with Theorem 1.1 in \cite{Gut-2009} yield $U^\alpha \distr U$, where 
$U \sim \Normal_{i_1 - i_0 + 1}(0,c_t^2 \diag(\lambda_{i_0}, \dots, \lambda_{i_1}))$ 
with 
\begin{align*}
c_t^2 &= \var\bigg( Z\bigg( I\{D = t+1\} - \frac{\E[ZI\{D= t+1\}]}{\E[ZI\{D\leq t\}]} I\{D \leq t\} \bigg) \bigg)\\
&=\E[Z^2I\{D = t+1\}]+\bigg(\frac{\E[ZI\{D= t+1\}]}{\E[ZI\{D\leq t\}]}\bigg)^2\E[Z^2I\{D \leq t\}].
\end{align*}
By the strong law of large numbers, $B^\alpha \as B$, where 
\begin{align*}
B_{j, l} = & \E[ZI\{D\leq t\}]^{-1/2} \cdot
\begin{cases} 
\lambda_j^{-1/2} \left( 1 - 
\frac{\lambda_j}{ \sum_{i=i_0}^{i_1} \lambda_i} \right), &j = l\\
\lambda_j^{-1/2}\left(-
\frac{\lambda_j}{ \sum_{i=i_0}^{i_1} \lambda_i} \right), &j \neq l.
\end{cases}
\end{align*}
Hence, by Slutsky's theorem (multivariate version), $W^\alpha = B^\alpha U^\alpha \distr BU = W$, where $W\sim \Normal_{i_1-i_0+1}(0,\Sigma)$ with 
$$
\Sigma =  B \cov(U) B^{\trans} = \frac{c_t^2}{\E[ZI\{D\leq t\}]} \widetilde \Sigma
=\sigma_t^2 \widetilde \Sigma
$$
The eigenvalues of $\widetilde{\Sigma}$ are $\mu_1 = 1, \mu_2 = 0$ with corresponding eigenspaces
$$
\text{Eig}_1 = \text{span}\left(
\begin{bmatrix}
\lambda_{i_0 + 1}^{1/2} \\
- \lambda_{i_0}^{1/2}\\
0 \\
0 \\
\vdots \\
0
\end{bmatrix},
\begin{bmatrix}
\lambda_{i_0 + 2}^{1/2} \\
0 \\
-\lambda_{i_0}^{1/2} \\
0 \\
\vdots \\
0
\end{bmatrix}, \dots,
\begin{bmatrix}
\lambda_{i_1}^{1/2} \\
0 \\
0\\
\vdots \\
0 \\
- \lambda_{i_0}^{1/2}
\end{bmatrix}
\right),
 \quad
 \text{Eig}_0 = \text{span}\left(
\begin{bmatrix}
\lambda_{i_0}^{1/2} \\
\vdots \\
\lambda_{i_1}^{1/2}
\end{bmatrix}
\right)
$$
and hence geometric multiplicities $i_1 - i_0$ and $1$, respectively. By Lemma \ref{lem:normofgaussian},  
\begin{align*}
\sum_{i=i_0}^{i_1} W_i^2 \eqdis \sigma_t^2\sum_{i=i_0}^{i_1-1} Q_i^2,
\end{align*}
where $Q_{i_0}, \dots, Q_{i_1 -1}$ are independent and standard normal. Altogether, we have shown that 
\begin{align*}
\widehat{\sigma}^2_{t}
\eqdis \frac{1}{i_1-i_0}\sum_{i=i_0}^{i_1}\big(W^{\alpha}_i\big)^2 
\distr \sigma_t^2\frac{ \chi^2_{i_1 - i_0}}{i_{1} - i_{0}}.
\end{align*}
\end{proof}

\begin{proof}[Proof of Theorem \ref{thm:msep_second_term}]
Write $S_t=\widehat{f}_{T-i+1}\dots \widehat{f}_{t-1}(f_t-\widehat{f}_t)f_{t+1}\dots f_{T-1}$ and note, as noted in \cite{Mack-93}, that 
$f_{T-i+1}\dots f_{T-1}-\widehat{f}_{T-i+1}\dots\widehat{f}_{T-1}=\sum_{t=T-i+1}^{T-1}S_t$. Hence, the statement of the theorem follows if we show the appropriate convergence in distribution of   
\begin{align}\label{eq:msep_second_term_rep}
\frac{\alpha^{-1/2}C^{\alpha}_{i,T-i+1}}{(\alpha^{-1}C^{\alpha}_{i,T-i+1})^{1/2}}\sum_{t=T-i+1}^{T-1}S_t.
\end{align}
Write 
\begin{align*}
S_t&=\bigg(\prod_{s=T-i+1}^{t-1}\widehat{f}_s\bigg)\bigg(\prod_{s=t+1}^{T-1}f_s\bigg)\bigg(\frac{\E[ZI\{D\leq t+1\}]}{\E[ZI\{D\leq t\}]}-\frac{\sum_{j=1}^{T-t}C^{\alpha}_{j,t+1}}{\sum_{j=1}^{T-t}C^{\alpha}_{j,t}}\bigg)\\
&=\frac{(\prod_{s=T-i+1}^{t-1}\widehat{f}_s)(\prod_{s=t+1}^{T-1}f_s)}{\sum_{j=1}^{T-t}C^{\alpha}_{j,t}}
\alpha^{1/2}\sum_{j=1}^{T-t}\bigg(\frac{M^{\alpha}_j}{\alpha}\bigg)^{1/2}U^{\alpha}_{j,t}, 
\end{align*}
where 
\begin{align*}
U^{\alpha}_{j,t}=(M^{\alpha}_j)^{-1/2}\sum_{k=1}^{M^{\alpha}_j}
Z_{j,k}\bigg(\frac{\E[ZI\{D=t+1\}]}{\E[ZI\{D\leq t\}]}I\{D_{j,k}\leq t\}-I\{D_{j,k}=t+1\}\bigg). 
\end{align*}
Therefore, we may write \eqref{eq:msep_second_term_rep} as 
\begin{align*}
\sum_{t=T-i+1}^{T-1}B^{\alpha}_t\sum_{j=1}^{T-t}\bigg(\frac{M^{\alpha}_j}{\alpha}\bigg)^{1/2}U^{\alpha}_{j,t}
=\sum_{j=1}^{i-1}\bigg(\frac{M^{\alpha}_j}{\alpha}\bigg)^{1/2}\sum_{t=T-i+1}^{T-j}B^{\alpha}_t U^{\alpha}_{j,t},
\end{align*}
where 
\begin{align*}
B^{\alpha}_t=\frac{(\alpha^{-1}C^{\alpha}_{i,T-i+1})^{1/2}}{\alpha^{-1}\sum_{j=1}^{T-t}C^{\alpha}_{j,t}}
\bigg(\prod_{s=T-i+1}^{t-1}\widehat{f}_s\bigg)\bigg(\prod_{s=t+1}^{T-1}f_s\bigg).
\end{align*}
We will use the facts that $(U^{\alpha}_{1,t})_{t=T-i+1}^{T-1}, (U^{\alpha}_{2,t})_{t=T-i+1}^{T-2}, \dots, U^{\alpha}_{i-1,T-i+1}$ are independent and that each one converges in distribution to a centered normally distributed random vector/variable, and that each $B^{\alpha}_t$ converges a.s.~as $\alpha\to\infty$. A multivariate version of Slutsky's theorem (essentially the continuous mapping theorem for weak convergence) then implies convergence in distribution of \eqref{eq:msep_second_term_rep} to a centered normally distributed random variable.  

Note that 
$$
B^{\alpha}_t\as B_t=\frac{(\lambda_i\E[ZI\{D\leq T-i+1\}])^{1/2}\prod_{s=T-i+1}^{T-1}f_s}{\sum_{j=1}^{T-t}\lambda_j\E[ZI\{D\leq t\}]f_t}.
$$
Note that, for each $j$, as $\alpha\to\infty$, $(U^{\alpha}_{j,t})_{t=T-i+1}^{T-1}$ converges in distribution to a centered normal random vector with covariance matrix $\Sigma$ with 
\begin{align*}
\Sigma_{t,t}&=(f_t-1)^2\E[Z^2I\{D\leq t\}]+\E[Z^2I\{D=t+1\}]\\
&=\sigma^2_t\E[ZI\{D\leq t\}], \\
\Sigma_{t,t+h}&=(f_{t+h}-1)\big((f_t-1)\E[Z^2I\{D\leq t\}]-\E[Z^2I\{D=t+1\}]\big), \quad h>0.
\end{align*}
If $Z$ and $D$ are independent, then it is seen from the above expression that $\Sigma$ is diagonal. In this case, 
\begin{align*}
\sum_{t=T-i+1}^{T-j}B^{\alpha}_t U^{\alpha}_{j,t}\distr \Normal_1\bigg(0,\sum_{t=T-i+1}^{T-j}(B_t)^2\Sigma_{t,t}\bigg)
\end{align*}
and consequently \eqref{eq:msep_second_term_rep} converges in distribution to a centered normally distributed random variable with variance 
\begin{align*}
&\sum_{j=1}^{i-1}\lambda_j\sum_{t=T-i+1}^{T-j}(B_t)^2\Sigma_{t,t}
=\sum_{t=T-i+1}^{T-1}(B_t)^2\Sigma_{t,t}\sum_{j=1}^{T-t}\lambda_j \\
&\quad=\lambda_i\E[ZI\{D\leq T-i+1\}]\prod_{s=T-i+1}^{T-1}f_s^2\sum_{t=T-i+1}^{T-1}\frac{\sigma^2_t/f_t^2}{\sum_{j=1}^{T-t}\lambda_j \E[ZI\{D\leq t\}]}.
\end{align*}
\end{proof}

\begin{proof}[Proof of Theorem \ref{thm:msep_first_term}]
By Corollary \ref{cor:renewal_clt}, $(H^{\alpha},F^{\alpha})\distr (H,F)$, where $H$ and $F$ are independent and normally distributed with zero means and  variances 
$\var(H)=\lambda_i\E[Z^2]\P(D\leq T-i+1)$ and $\var(F)=\lambda_i\E[Z^2]\P(D>T-i+1)$. 
Write $g_{T-i+1}=\frac{\E[ZI\{D\leq T\}]}{\E[ZI\{D\leq T-i+1\}]}=\prod_{s=T-i+1}^{T-1}f_s$ and Note that 
\begin{align*}
&\E\bigg[\frac{(C^{\alpha}_{i,T}-C^{\alpha}_{i,T-i+1}g_{T-i+1})^2}{C^{\alpha}_{T-i+1}} \mid \mathcal{D}^{\alpha}\bigg] \\
&\quad=\frac{\alpha}{C^{\alpha}_{T-i+1}}\E\Big[\big(F^{\alpha}-H^{\alpha}(g_{T-i+1}-1)\big)^2 \mid H^{\alpha}\Big] \\
&\quad=\frac{\alpha}{C^{\alpha}_{T-i+1}}\Big(\E[(F^{\alpha})^2]+(H^{\alpha})^2(g_{T-i+1}-1)^2\Big) \\
&\quad=\frac{\alpha}{C^{\alpha}_{T-i+1}}\Big(\frac{\E[M^{\alpha}_i]}{\alpha}\E[Z^2]\P(D>T-i+1)+(H^{\alpha})^2(g_{T-i+1}-1)^2\Big)
\end{align*}
Since $C^{\alpha}_{i,T-i+1}/\alpha\as \lambda_i\E[Z]\P(D\leq T-i+1)$ and $(H^{\alpha})^2\distr \lambda_i\E[Z^2]\P(D\leq T-i+1)\chi^2_1$, the conclusion follows from Slutsky's theorem. 
\end{proof}

\begin{proof}[Proof of Theorem \ref{thm:msep_third_term}]
$A^{\alpha}_2$ can be expressed as 
$$
A^{\alpha}_2=\alpha^{1/2}\sum_{t=T-i+1}^{T-1}S_t
$$
with $S_t$ as in the proof of Theorem \ref{thm:msep_second_term}. Hence, the arguments in the proof of \ref{thm:msep_second_term} shows that $(A^{\alpha}_2)_{\alpha\geq 0}$ converges in distribution to a normally distributed random variable with zero mean. 
$A^{\alpha}_1$ can be expressed as
$$
A^{\alpha}_1=-\frac{\P(D>T-i+1)}{\P(D\leq T-i+1)}\alpha^{-1/2}\big(C^{\alpha}_{i,T-i+1}-\E[C^{\alpha}_{i,T-i+1}]\big)
$$
from which convergence in distribution to a normally distributed random variable with zero mean follows immediately from Corollary \ref{cor:renewal_clt}. 
Since $(A^{\alpha}_1)_{\alpha\geq 0}$ and $(A^{\alpha}_2)_{\alpha\geq 0}$ are independent, individual convergence in distribution implies joint convergence in distribution. Consequently, mapping theorem for weak convergence implies that the product converges in distribution.  
\end{proof}

\begin{proof}[Proof of Theorem \ref{thm:renewal_clt}]
In order to ease the notation we drop the index $j$ and write $S^{\alpha}=\sum_{k=1}^{M^{\alpha}}X_k$. 
From the renewal process representation of $M^{\alpha}$, there exists an i.i.d.~sequence $(Y_k)$ independent of $(X_k)$ such that the sequence $(T_m)$ given by $T_m=\sum_{k=1}^m Y_k$ satisfies $M^{\alpha}=\sup\{m\geq 1:T_m\leq \alpha\}$. 
Therefore, $\lambda=1/\E[Y]$ and 
\begin{align*}
\alpha^{-1/2}(S^{\alpha}-\E[S^{\alpha}])
&=\alpha^{-1/2}(S^{\alpha}-\E[X]M^{\alpha}+\E[X](M^{\alpha}-\lambda\alpha))+o_{\P}(1)
\end{align*}
using that $\alpha^{-1/2}(\lambda\alpha-\E[M^{\alpha}])=o_{\P}(1)$, i.e.~convergence in probability to zero.  
Using (2.41) in \cite{EKM03}, $\alpha^{-1/2}(M^{\alpha}-\lambda\alpha)=\alpha^{-1/2}(M^{\alpha}-\lambda T_{M^{\alpha}})+o_{\P}(1)$. 
Hence, 
\begin{align*}
\alpha^{-1/2}(S^{\alpha}-\E[S^{\alpha}])
&=\alpha^{-1/2}\big(S^{\alpha}-\lambda\E[X]T_{M^{\alpha}}\big)+o_{\P}(1)\\
&=\alpha^{-1/2}\sum_{k=1}^{M_{\alpha}}(X_k-\lambda \E[X]Y_k)+o_{\P}(1)\\
&=\bigg(\frac{M^{\alpha}}{\alpha}\bigg)^{1/2}(M^{\alpha})^{-1/2}\sum_{k=1}^{M_{\alpha}}(X_k-\lambda \E[X]Y_k)+o_{\P}(1).
\end{align*}
Consequently, 
\begin{align*}
\alpha^{-1/2}(S^{\alpha}-\E[S^{\alpha}])\distr \Normal_T(0,\Sigma),
\end{align*}
where 
\begin{align*}
\Sigma=\lambda\cov(X-\lambda\E[X]Y)=\lambda\cov(X)+\lambda^3\var(Y)\E[X]\E[X]^{\trans}.
\end{align*}
If $M^{\alpha}$ is Poisson distributed, then $\var(Y)=1/\lambda^2$ and hence $\Sigma=\lambda\E[XX^{\trans}]$ is diagonal with $\Sigma_{t,t}=\lambda\E[Z^2I\{D=t\}]$.
\end{proof}

\end{document}